\documentclass{article}[a4paper,12pt]
\pdfoutput=1
\usepackage{bibleref}
\usepackage{caption}
\usepackage{graphicx}
\usepackage[width=150mm,top=25mm,bottom=25mm]{geometry}
\usepackage{amsmath,amsthm,latexsym,amssymb,amsfonts,latexsym}
\usepackage{setspace}
\usepackage{chemfig}
\usepackage{float}
\newtheorem{theorem}{Theorem}
\numberwithin{equation}{section}
\usepackage{xcolor,soul}
\usepackage{hyperref}
\linespread{1.25}
\title{Predictive ability comparison of different versions of some well known degree dependent topological indices}
\author{A. Bharali $^1$\footnote{Corresponding Author}, Mohammad Essa Nazari$^2$, Amanullah Nabavi$^3$\\$^{1,2}$Department of Mathematics, Dibrugarh University, Dibrugarh, India, 786004\\$^{3}$Department of Mathematics, Bamyan University, Bamyan, Afghanistan\\  a.bharali@dibru.ac.in$^{1*}$, nazariessa@yahoo.com$^2$, nabave786@gmail.com$^3$}
\begin{document}
\maketitle
\begin{abstract}
To avoid extensive lab work on properties of chemical compounds, QSPR/QSAR analysis for topological descriptors is a productive statistical approach to analyze various physicochemical properties of chemical compounds. Many researchers have investigated on correlation of degree-based topological descriptors. In this article we present the predictive ability of 6 well known degree dependant topological indices of 22 lower poly cyclic aromatic hydrocarbons in three versions, and we have done a comparison analysis of three versions of considered topological indices for their predictive ability.
\end{abstract}
	\textbf{Keywords: Chemical compounds, Physicochemical, Topological Indices, Predictive Ability.}
\section{Introduction}
A physicochemical characteristic of a chemical compound that is closely related to the structure, atom count, atom bonds, and atomic arrangement of the compound. Chemical graph theory uses a variety of mathematical and statistical techniques known as QSPR/QSAR models to interpret these interactions\cite{hosamani2017qspr,dehmer2017quantitative}. A molecular graph is a network whose vertices are atoms and links are bonds between atoms in chemical compounds. A molecule's numerical representation is called a molecular descriptor or topological index. A QSPR/QSAR model is a regression model that demonstrates a relationship between a molecular descriptor and a physicochemical property of a molecule; it has numerous applications in analytical chemistry, drug discovery, pharmaceutical property prediction, and material science.
Researchers investigate on models to interpret the relationship between molecular graph and the property of a molecule. Generally a topological index of a molecule calculate from underlying molecular graph a molecule. the first idea of a topological index presented by winner in 1947\cite{wiener1947structural}.\\
Examples of physicochemical properties include boiling point, enthalpy of formation, enthalpy of vaporization, critical pressure, critical volume, etc. Quantum-theoretic properties include atomic and subatomic compound characteristics. The total electronic energy, which is determined using Huckel's Molecular Orbital theory, is a prime example of quantum-theoretic properties.
The ability of topological indices to predict various physicochemical properties of chemical compounds has been the subject of numerous investigations. The predictability of several distance-based topological descriptors for the benzenoid hydrocarbons' -electronic energy was examined by Hayat et al in 2020 \cite{hayat2020distance}. Predictability of some degree-based topological indices to determine physicochemical properties of Polycyclic Aromatic Hydrocarbons (PAHs) presented by Malik et. al. \cite{malik2022correlation}. In 2022 Munjit et al \cite{chamua2022predictive} investigated on predictability of some neighborhood topological indices on PAHs. \\
Aim of this work is to investigate on predictive ability potential of different versions of some well known degree dependent topological indices in three different versions, the effect of degrees, degrees of neighbors and combination of both on molecule characteristics. In this work we utilized the graph polynomial for computing molecular descriptors. From the inception time of topological indices many works have been done on predictive ability of many degree- based and neighborhood degree sum-based topological indices individually in literature of chemical graph theory and gain vast attention of researchers but, the QSPR/QSAR analysis of closed neighborhood and comparing the predictive ability of different version of topological indices and showing the effect of degrees, degrees of neighbors and combining them are limited in the literature. This work may also be and attempt to fill up this gap. \\
The rest of this work organized as follows: The considered topological indices are defined in section 2. with some basic definitions. Computing method is presented in section 3. The predictive ability of different degree-based, neighborhood degree sum-based and closed neighborhood degree sum-based versions of topological indices for evaluating the normal boiling point of Polycyclic Aromatic Hydrocarbons presenting in section 4. A brief conclusion of this article in section 5. 

\begin{center}
\includegraphics[width=12cm,height=19cm]{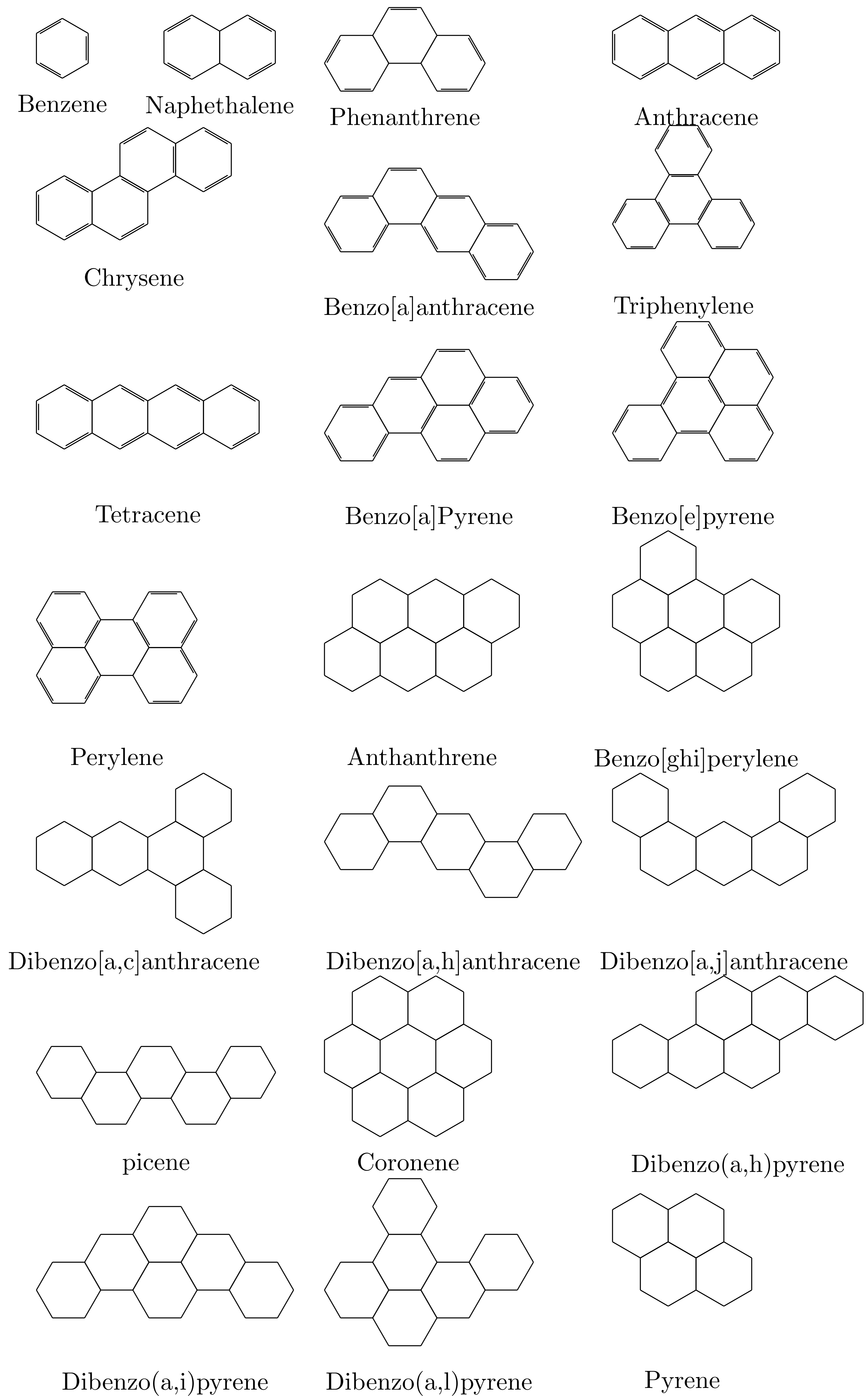}
\captionof{figure}{Molecule structures of 22 lower PAHs.}
\end{center}

\section{Preliminaries}
In this article we study on six well known topological descriptors in three versions which defined as follows:
The first and second Zagreb indices \cite{gutman2004first} are defined as,\\
\begin{equation*}
	M_{1}=\sum_{uv\in E{(\Gamma)}} (d_{u}+d_{v}),
	\end{equation*}  
\begin{equation*}
		M_{2}=\sum_{uv\in E{(\Gamma)}} (d_{u}d_{v}).
	\end{equation*} 
The Randi\'{c} index  \cite{randic1975characterization} is defined as,
\begin{equation*}
     R=\sum_{uv\in E{(\Gamma)}} \frac{1}{\sqrt{d_{u}d_{v}}}.
\end{equation*}
The Atom Bond Connectivity index \cite{estrada1998atom} is defined as,
\begin{equation*}
		ABC=\sum_{uv\in E{(\Gamma)}} \sqrt{\frac{d_{u}+d_{v}-2}{d_{u}d_{v}}}.
\end{equation*}
The Geometric Arithmetic index \cite{vukivcevic2009topological} is defined as,
\begin{equation*}
	 GA =\sum_{uv\in E{(\Gamma)}} \frac{2\sqrt{d_{u}d_{v}}}{d_{u}+d_{v}}.
\end{equation*}
The Redefined second version of Zagreb index \cite{pradeep2017redefined} is defined as,
\begin{equation*}
	ReZG_{2} =\sum_{uv\in E{(\Gamma)}} \frac{d_{u}d_{v}}{d_{u}+d_{v}}.
\end{equation*}
The third version of Zagreb index \cite{ghorbani2013third} is defined as,
\begin{equation*}
	M_{1}^{*}=\sum_{uv\in E{(\Gamma)}} (S_{(u)}+S_{(v)}).	
\end{equation*}
The neighborhood second Zagreb index \cite{mondal2019some} is defined as,
\begin{equation*}
	M_{2}^{*}=\sum_{uv\in E{(\Gamma)}} (S_{(u)}d_{(v)}).
\end{equation*}
The neighborhood Randi\'{c} index \cite{abubakar2022neighborhood} is defined as, 
\begin{equation*}
	R^{*}=\sum_{uv\in E{(\Gamma)}} \frac{1}{\sqrt{S_{(u)}S_{(v)}}}.
\end{equation*}
The neighborhood version of ABC index \cite{abubakar2022neighborhood} is defined as,
\begin{equation*}
	NABC=\sum_{uv\in E{(\Gamma)}} \sqrt{\frac{S_{(u)}+S_{(v)}-2}{S_{(u)}S_{(v)}}}.
\end{equation*}
The neighborhood version of GA index \cite{abubakar2022neighborhood} is defined as,
\begin{equation*}
	 NGA =\sum_{uv\in E{(\Gamma)}} \frac{2\sqrt{S_{(u)}S_{(v)}}}{S_{(u)}+S_{(v)}}.
\end{equation*}
The neighborhood redefined second Zagreb index is defined as,
\begin{equation*}
		NReZG_{2} =\sum_{uv\in E{(\Gamma)}} \frac{S_{(u)}S_{(v)}}{S_{(u)}+S_{(v)}}.
	\end{equation*}
Where the neighborhood degree sum $S_{u}$ of a vertex $u$ the sum of degrees all neighborhood vertices of degree $u$.\\
The closed neighborhood form of these indices as follows:
\begin{equation*}
		N^{c}M_{1}=\sum_{uv\in E{(\Gamma)}} (S_{[u]}+S_{[v]}).
\end{equation*}
\begin{equation*}
		N^{c}M_{2}=\sum_{uv\in E{(\Gamma)}} (S_{[u]}d_{[v]}).
\end{equation*}
\begin{equation*}
			N^{c}R=\sum_{uv\in E{(\Gamma)}} \frac{1}{\sqrt{S_{[u]}S_{[v]}}}.
\end{equation*}
\begin{equation*}
			N^{c}ABC=\sum_{uv\in E{(\Gamma)}} \sqrt{\frac{S_{[u]}+S_{[v]}-2}{S_{[u]}S_{[v]}}}.
\end{equation*}
\begin{equation*}
	 	N^{c}GA =\sum_{uv\in E{(\Gamma)}} \frac{2\sqrt{S_{[u]}S_{[v]}}}{S_{[u]}+S_{[v]}}.
\end{equation*}
\begin{equation*}
			N^{c}ReZG_{2} =\sum_{uv\in E{(\Gamma)}} \frac{S_{[u]}S_{[v]}}{S_{[u]}+S_{[v]}}.
\end{equation*}
Where $S_{[u]}$ is the sum of degrees of closed neighborhood vertices of degree $u$.
\section{Computational method}
In this section we provide detail for computational method by utilizing graph polynomials as a tool for computing topological indices. For degree dependent versions of topological indices we use  different versions of graph polynomials, for example; to compute degree-based, neighborhood degree sum-based we use M-polynomial \cite{deutsch2014m} and NM-polynomial \cite{mondal2021neighborhood}. By inspiring the mentioned polynomials, for closed neighborhood degree sum-based version of topological indices \cite{ravi2022closed} we can use CNM-polynomial. The formula of underlying graph polynomials for a graph $\Gamma$ shown as follows respectively:

\begin{align*}
P(\Gamma;x,y)=M(\Gamma;x,y)=\sum_{i\leq j}\chi _{ij}x^{i}y^{j}
\end{align*}
where, $\chi_{ij},\; i,j\geq1 $ is the number of edges $e=uv $ in $ \Gamma$ such that $\{ d_{u}, d_{v} \}=\{i,j\}.$

\begin{align*}
P(\Gamma;x,y)=NM(\Gamma;x,y)=\sum_{i\leq j}\chi _{(i,j)}x^{i}y^{j}
\end{align*}
 
 where, $\chi _{(i,j)}, \; i,j\geq 1 $ is the number of edges $e=uv$ in $ \Gamma $ such that $\{S_{(u)},S_{(v)}\}=\{i,j\}$ and $S_{(u)}=\sum_{v\in N_{\Gamma}(u)}d_{v}.$ And 
\begin{align*}
P(\Gamma;x,y)=CNM(\Gamma;x,y)=\sum_{i\leq j}\chi_{[i,j]}x^{i}y^{j}
\end{align*}
where, $\chi _{[i,j]}, \; i,j\geq 1 $ is the number of edges $e=uv$ in $ \Gamma $ such that $\{S_{[u]},S_{[v]}\}=\{i,j\}$ and $S_{[u]}=\sum_{v\in N_{\Gamma}[u]}d_{v}.$

\begin{table}[ht]
	\centering
	\resizebox{\textwidth}{!}
	{
		\begin{tabular}{lllll}
			\hline
			\textbf{DB} & \textbf{NDSB} &\textbf{CNDSB}& \textbf{$f(x,y)$} & \textbf{Derivation from P(G;x,y)} \\
		\hline
			$M_{1}$ & $M_{1}^{*}$ & $N^{c}M_{1}$& $x+y$ & $(D_{x}+D_{y})(P(G;x,y))|_{x=y=1}$ \\
		
			$M_{2}$ & $M_{2}^{*}$ &$N^{c}M_{2}$& $xy$ & $(D_{x}D_{y})(P(G;x,y))|_{x=y=1}$ \\
		
			$R$ & $R^{*}$ &$N^{c}R$&  $\frac{1}{\sqrt{xy}}$ & $S_{x}^{\frac{1}{2}}S_{y}^{\frac{1}{2}}(P(G;x,y))|_{x=y=1}$ \\
		
			$ABC$ & $NABC(G)$ &$N^{c}ABC$& $\sqrt{\frac{x+y-2}{xy}}$ & $D_{x}^{\frac{1}{2}}Q_{-2}JS_{y}^{\frac{1}{2}}S_{x}^{\frac{1}{2}}(P(G;x,y))|_{x=y=1}$ \\
		
			$GA$ & $NGA$ &$N^{c}GA$& $\frac{2\sqrt{xy}}{x+y}$ & $2S_{x}JD_{y}^{\frac{1}{2}}D_{x}^{\frac{1}{2}}P(G;x,y)|_{x=y=1}$ \\
		
			$ReZG_{2}$ & $NReZG_{2}$ &$N^{c}ReZG_{2}$&$\frac{xy}{x+y}$& $S_{x}JD_{x}D_{y}P(G;x,y)|_{x=y=1} $\\
		
			\hline
		\end{tabular}
	}
	\caption{Derivation formula for Degree-Based(DB), Neighborhood Degree Sum-Based(NDSB) and Closed Neighborhood Degree Sum-Based(CNDSB) versions of 6 well known degree dependent Topological Indices(TIs).}
\end{table}
Where,
\begin{align*}
	 D_{x}&=x\frac{\partial(\gamma(x,y))}{\partial x},\;\; D_{y}&=y\frac{\partial(\gamma(x,y))}{\partial y},\;\; S_{x}&=\int_{0}^{x}\frac{\gamma(x,y)_{x=t}}{t}dt,\;\; S_{y}&=\int_{0}^{y}\frac{\gamma(x,y)_{y=t}}{t}dt,\;\;
	  J(\gamma(x,y))&=\gamma(x,y)_{x=y},
	  \end{align*}
  \begin{align*}
	  Q_{\alpha}\gamma(x,y)&=x^{\alpha}\gamma(x,y),
	  D_{x}^{\frac{1}{2}}\gamma(x,y)&=\sqrt{x\frac{\partial\gamma(x,y)}{\partial x}}. \sqrt{\gamma(x,y)} S_{x}^{\frac{1}{2}}\gamma(x,y)&=\sqrt{\int_{0}^{x}\frac{\gamma(x,y)_{x=t}}{t}dt}. \sqrt{\gamma(x,y)}.
\end{align*}
	By doing edge partition with respect to degree vertex, neighborhood degree and closed neighborhood degree of 22 lower PAHs and using the derivation formula from table 1. for corresponding polynomial we compute the desire results.

\begin{theorem}
	Consider the molecular graph $\Gamma$ for Naphthalene; then, $M_{1}=50$, $M_{2}=57$, $ R=4.96633$, $ABC=7.73773$, $GA=6$, $ReZG_{2}=6$, $M_{1}^{*}=114$, $M_{2}^{*}=301$, $ R^{*}=2.21341$, $NABC=6.22414$, $NGA=10.9193$, $NReZG_{2}=28.0556$, $N^{c}M_{1}=164$, $N^{c}M_{2}=620$, $N^{c}R=1.52864$, $N^{c}ABC=5.37706$, $N^{c}GA=10.9881$ and $N^{c}ReZG_{2}=40.3937$.
\end{theorem}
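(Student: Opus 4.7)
The plan is to reduce the computation of all eighteen quantities to three edge partitions of $\Gamma$---one tailored to each of the polynomials $M(\Gamma;x,y)$, $NM(\Gamma;x,y)$, and $CNM(\Gamma;x,y)$---and then to apply the corresponding row of operators from Table~1.

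First, I would label naphthalene as two hexagons sharing a single edge, with $10$ vertices and $11$ edges, noting that the two shared bridge vertices are the only ones of degree $3$ while the remaining eight are of degree $2$. From this I would read off, vertex by vertex, the neighborhood degree sums $S_{(w)}$: the two bridge vertices contribute $S_{(w)}=7$, the four peripheral vertices adjacent to a bridge contribute $S_{(w)}=5$, and the remaining four contribute $S_{(w)}=4$. The closed sums $S_{[w]}=d_{w}+S_{(w)}$ then take the values $10$, $7$, $6$ in the same pattern.

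Second, I would partition the $11$ edges three times. With respect to vertex degree the classes are $(2,2),(2,3),(3,3)$ with multiplicities $6,4,1$; with respect to $S_{(\cdot)}$ the classes are $(4,4),(4,5),(5,7),(7,7)$ with multiplicities $2,4,4,1$; and with respect to $S_{[\cdot]}$ the classes are $(6,6),(6,7),(7,10),(10,10)$ with the same multiplicities $2,4,4,1$. These partitions yield
\begin{align*}
M(\Gamma;x,y)&=6x^{2}y^{2}+4x^{2}y^{3}+x^{3}y^{3},\\
NM(\Gamma;x,y)&=2x^{4}y^{4}+4x^{4}y^{5}+4x^{5}y^{7}+x^{7}y^{7},\\
CNM(\Gamma;x,y)&=2x^{6}y^{6}+4x^{6}y^{7}+4x^{7}y^{10}+x^{10}y^{10}.
\end{align*}

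Third, each of the eighteen indices is obtained by applying its operator from Table~1 to the appropriate polynomial and evaluating at $x=y=1$. Equivalently, each edge class $(i,j)$ contributes its multiplicity times the function $f(i,j)$ from the third column of Table~1, so that the sum/product quantities $M_{1},M_{2},M_{1}^{*},M_{2}^{*},N^{c}M_{1},N^{c}M_{2}$ arise as linear combinations of the class sizes weighted by $i+j$ or $ij$, while $R,ABC,GA,ReZG_{2}$ and their neighborhood and closed-neighborhood variants reduce to short sums of radicals.

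The main obstacle is purely bookkeeping, not insight: the values of $S_{(w)}$ and $S_{[w]}$ must be read off correctly at every vertex, since a single error there propagates through all twelve $NDSB$ and $CNDSB$ quantities. Once the three polynomials above have been verified---most easily by checking that $M_{1},M_{1}^{*},N^{c}M_{1}$ recover the expected handshake-type sums $\sum_{uv}(d_{u}+d_{v})$, $\sum_{uv}(S_{(u)}+S_{(v)})$ and $\sum_{uv}(S_{[u]}+S_{[v]})$ over the $11$ edges---each remaining identity in the statement is a single numerical substitution.
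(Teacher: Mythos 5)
Your proposal is correct and follows essentially the same route as the paper: the same three edge partitions of naphthalene's $11$ edges (multiplicities $6,4,1$ by degree and $2,4,4,1$ for both the open and closed neighborhood sums), the same three polynomials, and the same appeal to the operators of Table~1 to extract all eighteen values. The only addition is your sanity check via the handshake-type sums, which is a sensible safeguard but not a different method.
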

\begin{proof}
From Figure 1, we consider the molecular graph of naphthalene, it is obvious that there are 10 vertices and 11 edges. based on degrees of end vertices there are three types, based on neighborhood degree three types of edges counting 11. They are as follows.
\begin{align*}
	E_{2,2}&=\{e=uv\in E(\Gamma)|d_{u}=2, d_{v}=2\},\\
	E_{2,3}&=\{e=uv\in E(\Gamma)|d_{u}=2, d_{v}=3\},\\
	E_{3,3}&=\{e=uv\in E(\Gamma)|d_{u}=2, d_{v}=3\}.
	\end{align*}
Edge partition for neighborhood degree of end vertices are as.
\begin{align*}
	E_{4,4}&=\{e=uv\in E(\Gamma)|S_{(u)}=4, S_{(v)}=4\},\\
	E_{4,5}&=\{e=uv\in E(\Gamma)|S_{(u)}=4, S_{(v)}=5\},\\
	E_{5,7}&=\{e=uv\in E(\Gamma)|S_{(u)}=5, S_{(v)}=7\},\\
	E_{7,7}&=\{e=uv\in E(\Gamma)|S_{(u)}=7, S_{(v)}=7\}.
	\end{align*}
Edge partition for closed neighborhood degree of end vertices are sa.
\begin{align*}
E_{6,6}&=\{e=uv\in E(\Gamma)|S_{[u]}=6, S_{[v]}=6\},\\
E_{6,7}&=\{e=uv\in E(\Gamma)|S_{[u]}=6, S_{[v]}=7\},\\
E_{7,10}&=\{e=uv\in E(\Gamma)|S_{[u]}=7, S_{[v]}=10\},\\
E_{10,10}&=\{e=uv\in E(\Gamma)|S_{[u]}=10, S_{[v]}=10\}.
\end{align*}
such that 
\begin{align*}
	|E_{2,2}|&=6,\\
	|E_{2,3}|&=4,\\
	|E_{3,3}|&=1,\\
	|E_{(4,4)}|&=2,\\
	|E_{(4,5)}|&=4,\\
	|E_{(5,7)}|&=4,\\
	|E_{(7,7)}|&=1,\\
	|E_{[6,6]}|&=2,\\
	|E_{[6,7]}|&=4,\\
	|E_{[7,10]}|&=4,\\
	|E_{[10,10]}|&=1.
	\end{align*}
Considering the edge partition and using the corresponding polynomials as,
\begin{align*}
	M(\Gamma;x,y)&=\sum_{i\leq j}\chi_{ij}x^{i}y^{j}=6x^{2}y^{2}+4x^{2}y^{3}+x^{3}y^{3},\\
	NM(\Gamma;x,y)&=\sum_{i\leq j}\chi_{(ij)}x^{i}y^{j}=2x^{4}y^{4}+4x^{4}y^{5}+4x^{5}y^{7}+x^{7}y^{7},\\
		CNM(\Gamma;x,y)&=\sum_{i\leq j}\chi_{[ij]}x^{i}y^{j}=2x^{6}y^{6}+4x^{6}y^{7}+4x^{7}y^{10}+x^{10}y^{10}.
\end{align*}
now by applying the derivation formula from Table 1 we can obtain the desire results.

\end{proof}
The rest of the result can be obtained in this manner.

\begin{table}[ht]
	\centering
	\caption{Numerical values of Degree-Based versions of topological descriptors.}
	\resizebox{\textwidth}{!}
	{
		\begin{tabular}{lccccccc}
			\hline
			
			Molecules & $ M_1 $   & $ M_2 $   & $ R $     & $ ABC $   & $ GA $    & $ ReZG_2 $ & $ BP $ \\
			\hline
			
			Benzene & 24    & 24    & 3     & 4.24264 & 6     & 6     & 80.1 \\
			
			Naphthalene & 50    & 57    & 4.96633 & 7.73773 & 10.9192 & 12.3  & 218 \\
			
			Phenanthrene & 76    & 91    & 6.94949 & 11.1924 & 15.8788 & 18.7  & 338 \\
			
			Anthracene & 76    & 90    & 6.93265 & 11.2328 & 15.8384 & 18.6  & 340 \\
			
			Chrysene & 102   & 125   & 8.93265 & 14.647 & 20.8384 & 25.1  & 431 \\
			
			Benzo[a]anthracene & 102   & 124   & 8.91582 & 14.6875 & 20.798 & 25    & 425 \\
			
			Triphenylene & 102   & 126   & 8.94949 & 14.6066 & 20.8788 & 25.2  & 429 \\
			
			Tetracene & 102   & 123   & 8.89898 & 14.7279 & 20.7576 & 24.9  & 440 \\
			
			Benzo[a]pyrene & 120   & 151   & 9.91582 & 16.6875 & 23.798 & 29.5  & 496 \\
			
			Benzo[e]pyrene & 120   & 152   & 9.93265 & 16.647 & 23.8384 & 29.6  & 493 \\
			
			Perylene & 120   & 152   & 9.93265 & 16.647 & 23.8384 & 29.6  & 497 \\
			
			Anthanthrene & 138   & 177   & 10.899 & 18.7279 & 26.7576 & 33.9  & 547 \\
			
			Benzo[ghi]preylene & 138   & 178   & 10.9158 & 18.6875 & 26.798 & 34    & 542 \\
			
			Dibenzo[a,c]anthracene & 128   & 159   & 10.9158 & 18.1017 & 25.798 & 31.5  & 535 \\
			
			Dibenzo[a,h]anthracene & 128   & 158   & 10.899 & 18.1421 & 25.7576 & 31.4  & 535 \\
			
			Dibenzo[a,j]anthracene & 128   & 158   & 10.899 & 18.1421 & 25.7576 & 31.4  & 531 \\
			
			Picene & 128   & 159   & 10.9158 & 18.1017 & 25.798 & 31.5  & 519 \\
			
			Coronene & 156   & 204   & 11.899 & 20.7279 & 29.7576 & 38.4  & 590 \\
			
			Dibenzo(a,b)pyrene & 146   & 185   & 11.899 & 20.1421 & 28.7576 & 35.9  & 596 \\
			
			Dibenzo(a,i)pyrene & 146   & 185   & 11.899 & 20.1421 & 28.7576 & 35.9  & 594 \\
			
			Dibenzo(a,l)pyrene & 146   & 186   & 11.9158 & 20.1017 & 28.798 & 36    & 595 \\
			
			Pyrene & 94    & 117   & 7.93265 & 13.2328 & 18.8384 & 23.1  & 393 \\
			\hline
			
	\end{tabular}}%
	\label{tab:addlabel1}%
\end{table}%

\begin{table}[ht]
	\centering
	\caption{Numerical values of Neighborhood degree sum-based topological descriptors.}
	\resizebox{\textwidth}{!}{
		\begin{tabular}{lccccccc}
			
			\hline	
			Molecules & $ M_1^* $ & $ M_2^* $ & $ R^* $   & $ NABC $  & $ NGA $   & $ NReZG_2 $ & $ BP $ \\
			\hline
			Benzene & 48    & 96    & 1.5   & 3.67423 & 6     & 12    & 80.1 \\
			
			Naphthalene & 114   & 301   & 2.21341 & 6.22414 & 10.9193 & 28.0556 & 218 \\
			
			Phenanthrene & 182   & 533   & 2.97904 & 8.77509 & 15.8609 & 44.6761 & 338 \\
			
			Anthracene & 180   & 518   & 2.97348 & 8.76608 & 15.9074 & 44.4786 & 340 \\
			
			Chrysene & 250   & 766   & 3.74527 & 11.325 & 20.8069 & 61.3299 & 431 \\
			
			Benzo[a]anthracene & 248   & 751   & 3.74006 & 11.3154 & 20.8547 & 61.1367 & 425 \\
			
			Triphenylene & 252   & 792   & 3.79032 & 11.3394 & 20.8009 & 61.7949 & 429 \\
			
			Tetracene & 246   & 735   & 3.73355 & 11.308 & 20.8956 & 60.9017 & 440 \\
			
			Benzo[a]pyrene & 302   & 982   & 4.049 & 12.6543 & 23.8039 & 74.2854 & 496 \\
			
			Benzo[e]pyrene & 304   & 1004  & 4.07401 & 12.6667 & 23.7798 & 74.6533 & 493 \\
			
			Perylene & 304   & 1002  & 4.07081 & 12.6725 & 23.7639 & 74.5683 & 497 \\
			
			Anthanthrene & 354   & 1199  & 4.35315 & 13.9828 & 26.8044 & 87.2703 & 547 \\
			
			Benzo[ghi]preylene & 356   & 1218  & 4.36089 & 13.988 & 26.7745 & 87.5967 & 542 \\
			
			Dibenzo[a,c]anthracene & 318   & 1011  & 4.5523 & 13.8781 & 25.8005 & 78.293 & 535 \\
			
			Dibenzo[a,h]anthracene & 316   & 984   & 4.50664 & 13.8647 & 25.802 & 77.7947 & 535 \\
			
			Dibenzo[a,j]anthracene & 316   & 984   & 4.50664 & 13.8647 & 25.802 & 77.7947 & 531 \\
			
			Picene & 318   & 999   & 4.5115 & 13.8749 & 25.7529 & 77.9838 & 519 \\
			
			Coronene & 408   & 1434  & 4.65097 & 15.3035 & 29.7851 & 100.625 & 590 \\
			
			Dibenzo(a,b)pyrene & 370   & 1221  & 4.83627 & 15.2045 & 28.7759 & 91.0986 & 596 \\
			
			Dibenzo(a,i)pyrene & 370   & 1221  & 4.83627 & 15.2045 & 28.7759 & 91.0986 & 594 \\
			
			Dibenzo(a,l)pyrene & 373   & 1256  & 4.86257 & 15.2071 & 28.7643 & 91.7876 & 595 \\
			
			Pyrene & 234   & 743   & 3.26174 & 10.104 & 18.832 & 57.4722 & 393 \\
			\hline
			
	\end{tabular}}%
	\label{tab:addlabel2}%
\end{table}%

\begin{table}[ht]
	\centering
	\caption{Numerical values of Closed Neighborhood degree sum-based version of topological descriptors. }
	\resizebox{\textwidth}{!}{
		\begin{tabular}{lccccccc}
			\hline
			Molecules & \multicolumn{1}{l}{$ N^cM_1 $} & \multicolumn{1}{l}{$ N^cM_2 $} & \multicolumn{1}{l}{$ N^cR $} & \multicolumn{1}{l}{$ N^cABC $} & \multicolumn{1}{l}{$ N^cGA $} & \multicolumn{1}{l}{$ N^cReZG_2 $} & \multicolumn{1}{l}{$ BP $} \\
			\hline
			Benzene & 72    & 216   & 1     & 3.16228 & 6     & 18    & 80.1 \\
			Naphthalene & 164   & 620   & 1.52864 & 5.37706 & 10.9881 & 40.3937 & 218 \\
			Phenanthrene & 258   & 1064  & 2.08102 & 7.59712 & 15.8731 & 63.4254 & 338 \\
			Anthracene & 256   & 1040  & 2.07585 & 7.59018 & 15.9006 & 63.1714 & 340 \\
			Chrysene & 352   & 1509  & 2.63361 & 9.81679 & 20.8231 & 86.481 & 431 \\
			Benzo[a]anthracene & 350   & 1485  & 2.62859 & 9.80962 & 20.8513 & 86.2302 & 425 \\
			Triphenylene & 354   & 1548  & 2.65504 & 9.82614 & 20.8322 & 87.0513 & 429 \\
			Tetracene & 348   & 1460  & 2.62307 & 9.8033 & 20.8758 & 85.9492 & 440 \\
			Benzo[a]pyrene & 434   & 1939  & 3.03541 & 11.5111 & 24.8171 & 106.914 & 496 \\
			Benzo[e]pyrene & 424   & 1936  & 2.88199 & 10.9923 & 23.8093 & 104.354 & 493 \\
			Perylene & 424   & 1934  & 2.88091 & 10.9942 & 23.8014 & 104.293 & 497 \\
			Anthanthrene & 492   & 2298  & 3.10405 & 12.151 & 26.813 & 121.369 & 547 \\
			Benzo[ghi]preylene & 494   & 2326  & 3.11002 & 12.1565 & 26.7941 & 121.716 & 542 \\
			Dibenzo[a,c]anthracene & 446   & 1970  & 3.20297 & 12.038 & 25.8133 & 109.883 & 535 \\
			Dibenzo[a,h]anthracene & 444   & 1930  & 3.18133 & 12.0291 & 25.802 & 109.289 & 535 \\
			Dibenzo[a,j]anthracene & 444   & 1930  & 3.18133 & 12.0291 & 25.802 & 109.289 & 531 \\
			Picene & 446   & 1954  & 3.18621 & 12.0365 & 25.7731 & 109.537 & 519 \\
			Coronene & 564   & 2718  & 3.33914 & 13.3188 & 29.7868 & 139.139 & 590 \\
			Dibenzo(a,b)pyrene & 516   & 2356  & 3.4299 & 13.2042 & 28.7881 & 127.146 & 596 \\
			Dibenzo(a,i)pyrene & 516   & 2356  & 3.4299 & 13.2042 & 28.7881 & 127.146 & 594 \\
			Dibenzo(a,l)pyrene & 519   & 2407  & 3.44301 & 13.2052 & 28.7918 & 127.936 & 595 \\
			Pyrene & 328   & 1450  & 2.30759 & 8.76395 & 18.846 & 80.6824 & 393 \\
			\hline
		\end{tabular}%
	}
	\label{tab:addlabel3}%
\end{table}%
\section{Data analysis and discussion}
In this section we provide an analysis of our computed results. The correlation coefficient values of DB, NDSB and CNDSB form of considered topological indices with normal boiling points of 22 PAHs are obtained in tables 5, 6 and 7 respectively. In all three forms of underlying TIs are high correlated with BP but  $R$, $ABC$ and $GA$ as compared to the other topological indices have excellent values of Correlation coefficient with normal boiling point. In table 5 the  $R$, $ABC$ and $GA$ have CC values $0.996293$, $0.996243$ and $0.99554$ respectively. In Tables 6 CC values of $NABC$, $NGA$ and $R^{*}$ are $0.99664, 0.99564$ and $0.991817$ respectively. In table 7 CC values of  $N^{c}ABC$, $N^{c}GA$ and $N^{c}R$ are $0.996343, 0.99544$ and $0.993579$ respectively.
We use linear regression model for considered topological indices. we make scatter plot of all three versions of given topological indices in Figures 2, 3 and 4. 
\subsection{Comparison of predictive ability of DB, NDSB and CNDSB forms of TIs} 
In this section we compare the predictive potential of given topological indices for three versions. In this comparison analysis we can indicate that the DB form of given topological indices have better predictive ability potential as compered to NDSB and CNDSB forms. The second best form for predictive is CNDSB. From this analysis we may find that the degree of a vertex is more influence than the degree of neighborhood vertices or we can say the first neighborhood is more important than the second and third neighborhood in order of distance. We have a graphical comparison of these three forms of given topological indices in Figure 5.

\begin{figure}[ht]
	\centering
	\includegraphics[width=5.7cm,height=3.4cm]{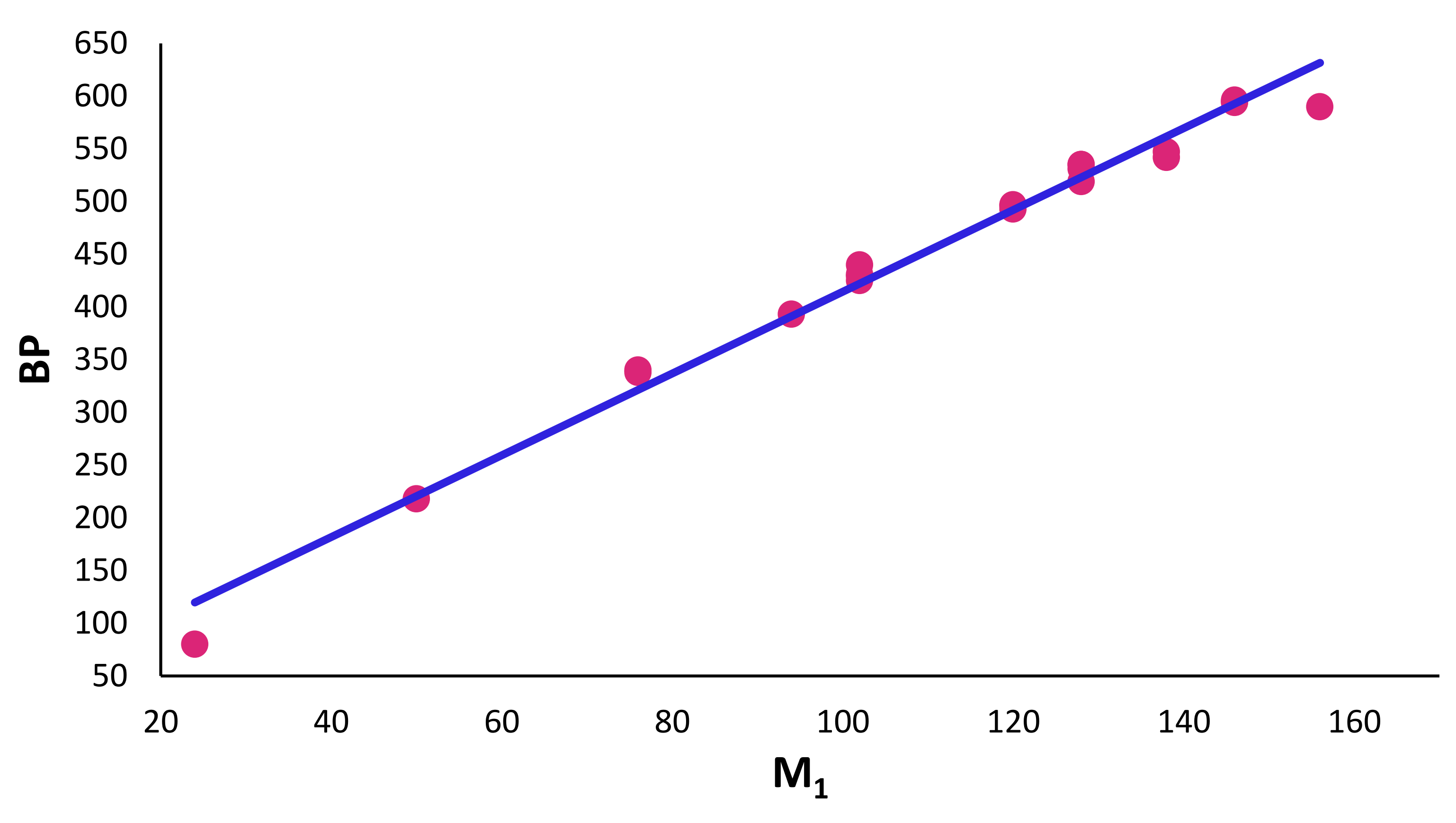}\hspace{1.5cm}
	\includegraphics[width=5.7cm,height=3.4cm]{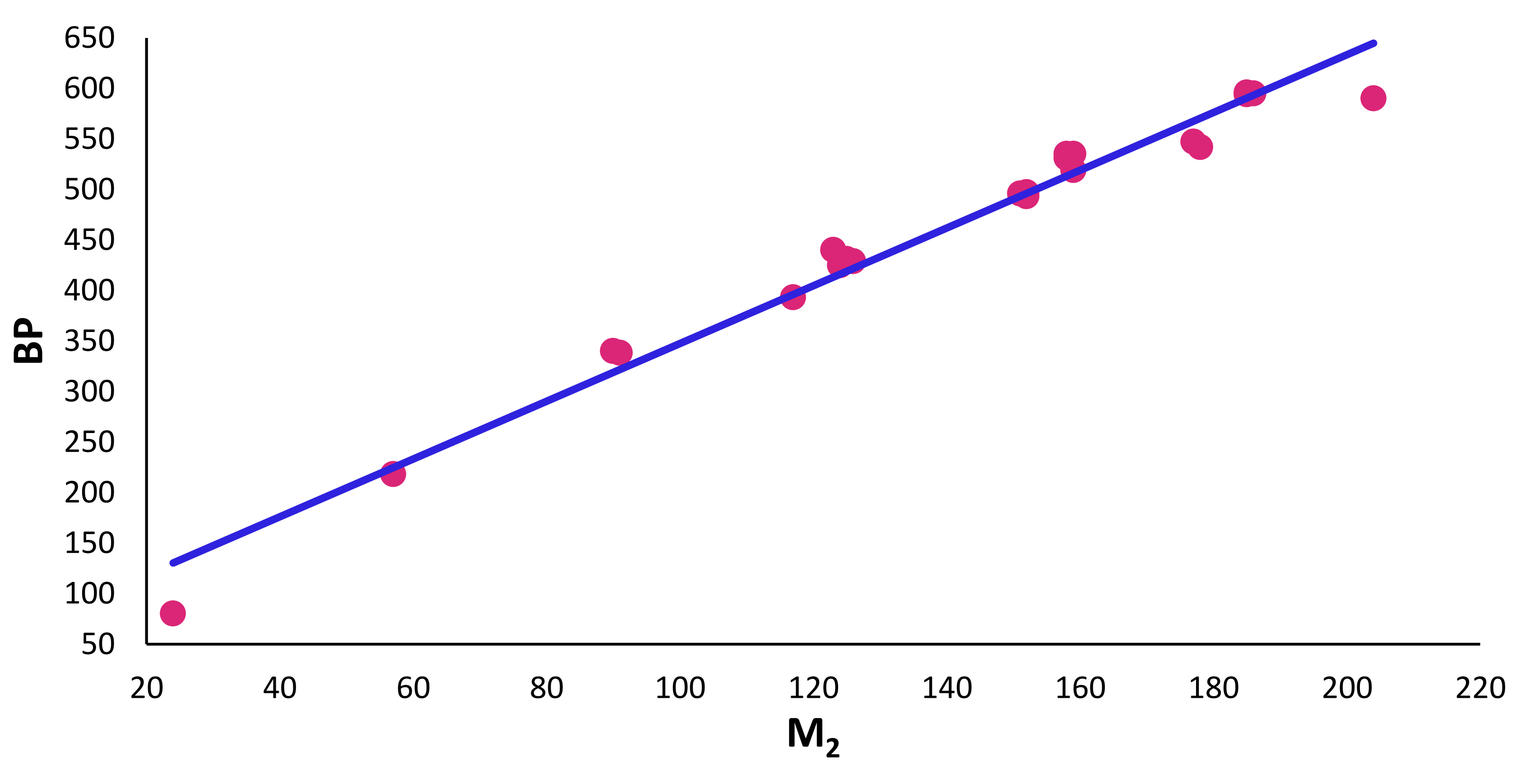}\vspace{0.5cm}
	\includegraphics[width=5.7cm,height=3.4cm]{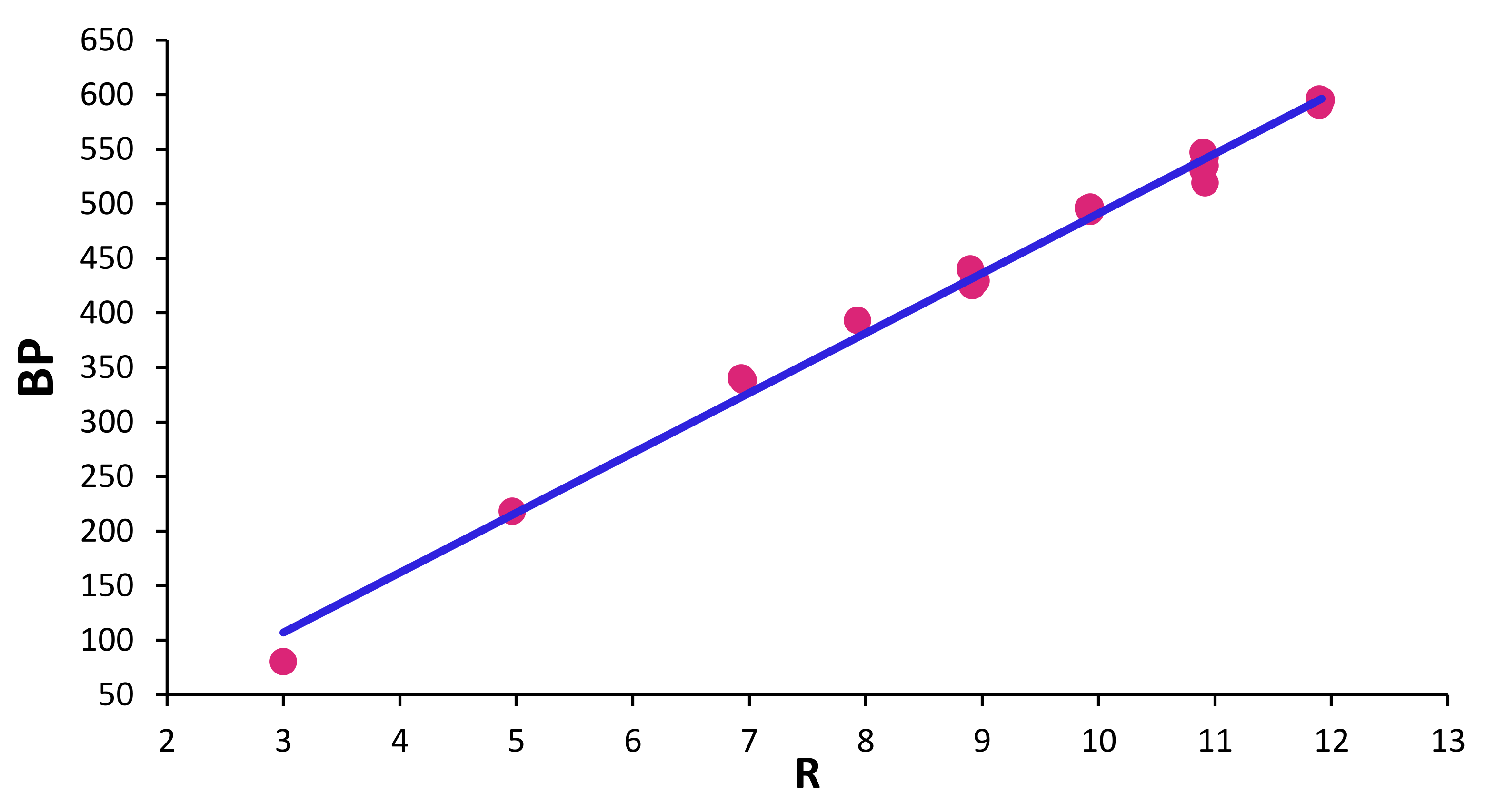}\hspace{1.5cm}
	\includegraphics[width=5.7cm,height=3.4cm]{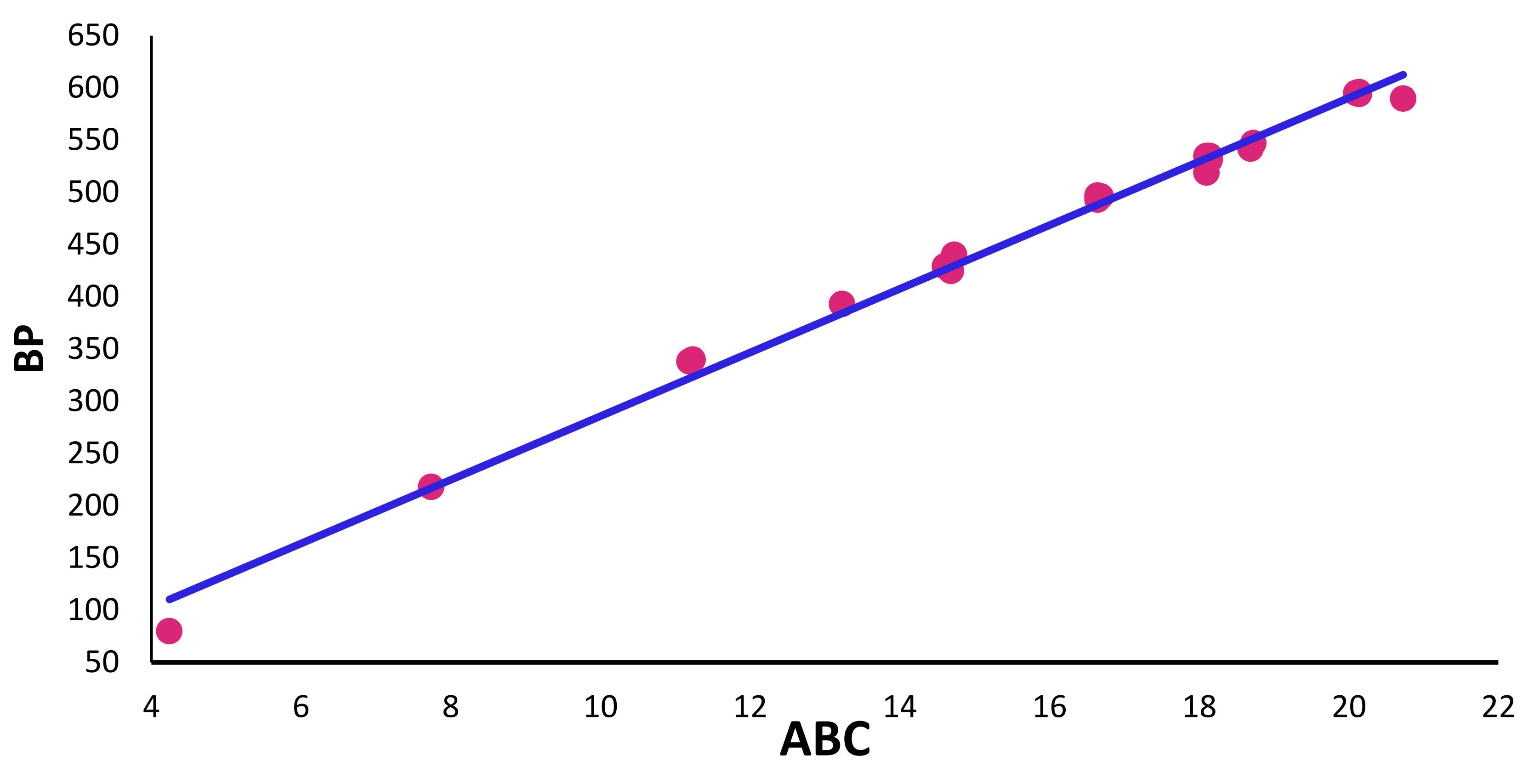}\vspace{0.5cm}
	\includegraphics[width=5.7cm,height=3.4cm]{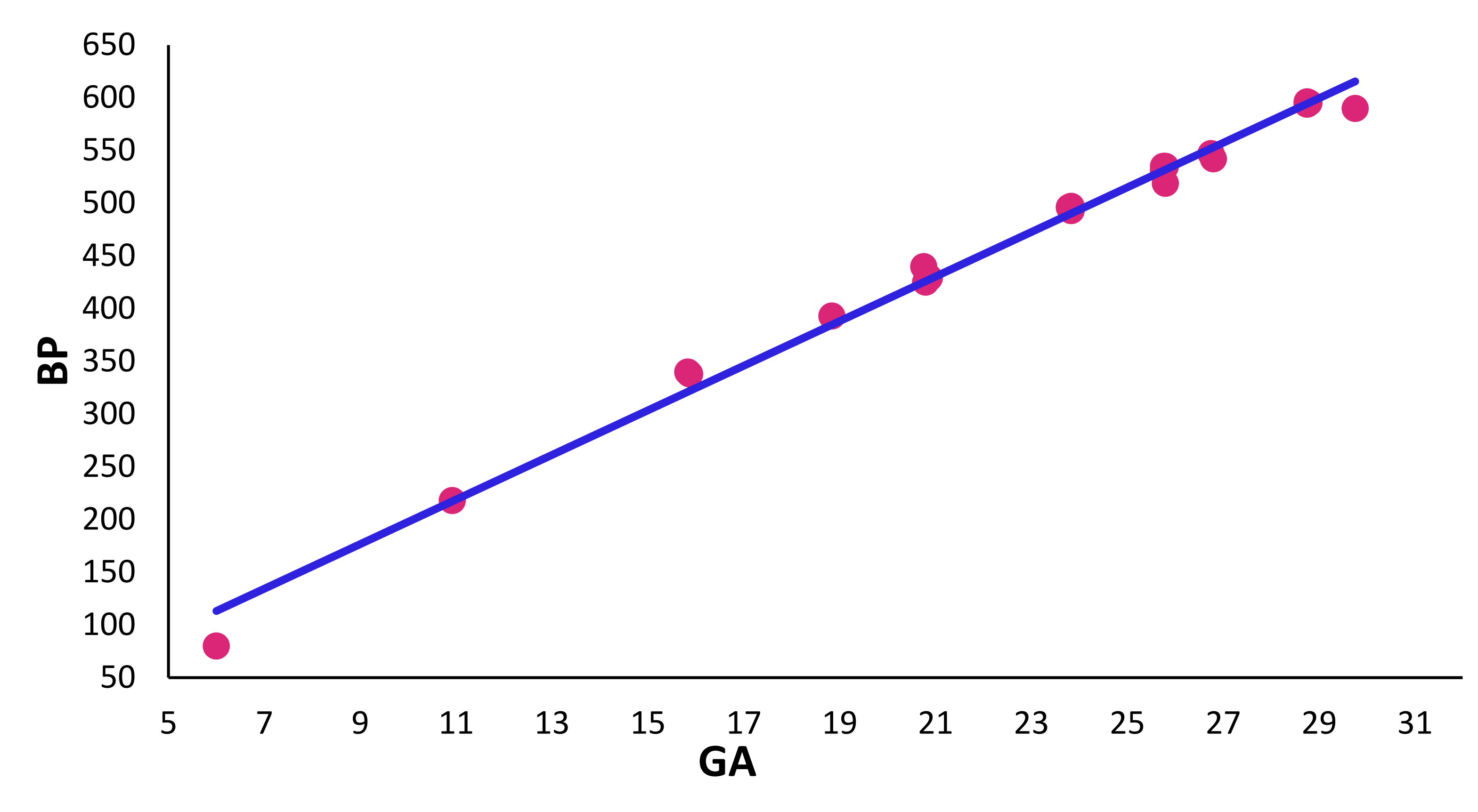}\hspace{1.5cm}
	\includegraphics[width=5.7cm,height=3.4cm]{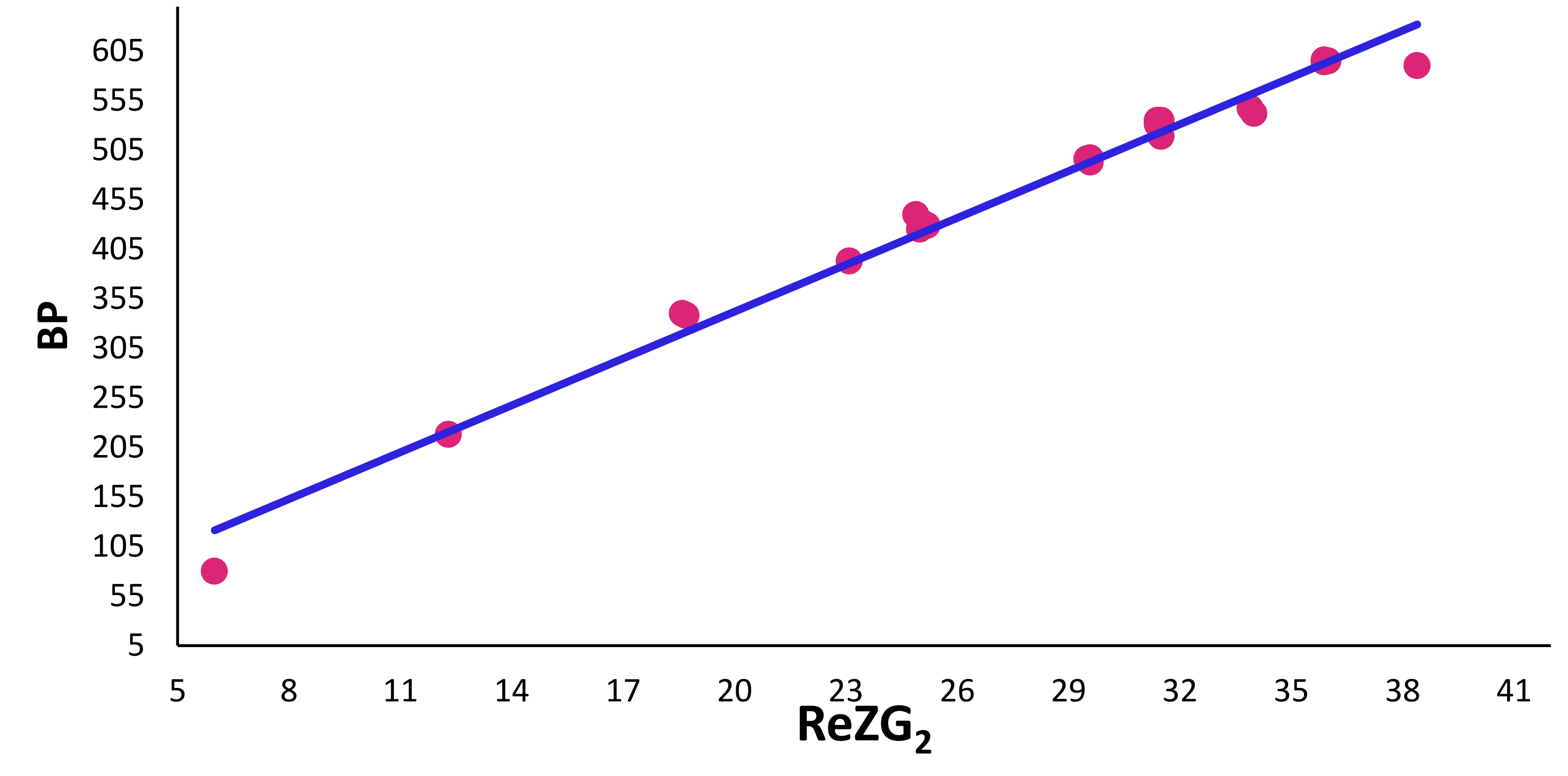}
	\caption{Graphical representation regression model between Boiling point and DB version of TIs.}
\end{figure}
\begin{figure}[ht]
	\centering
	\includegraphics[width=5.7cm,height=3.4cm]{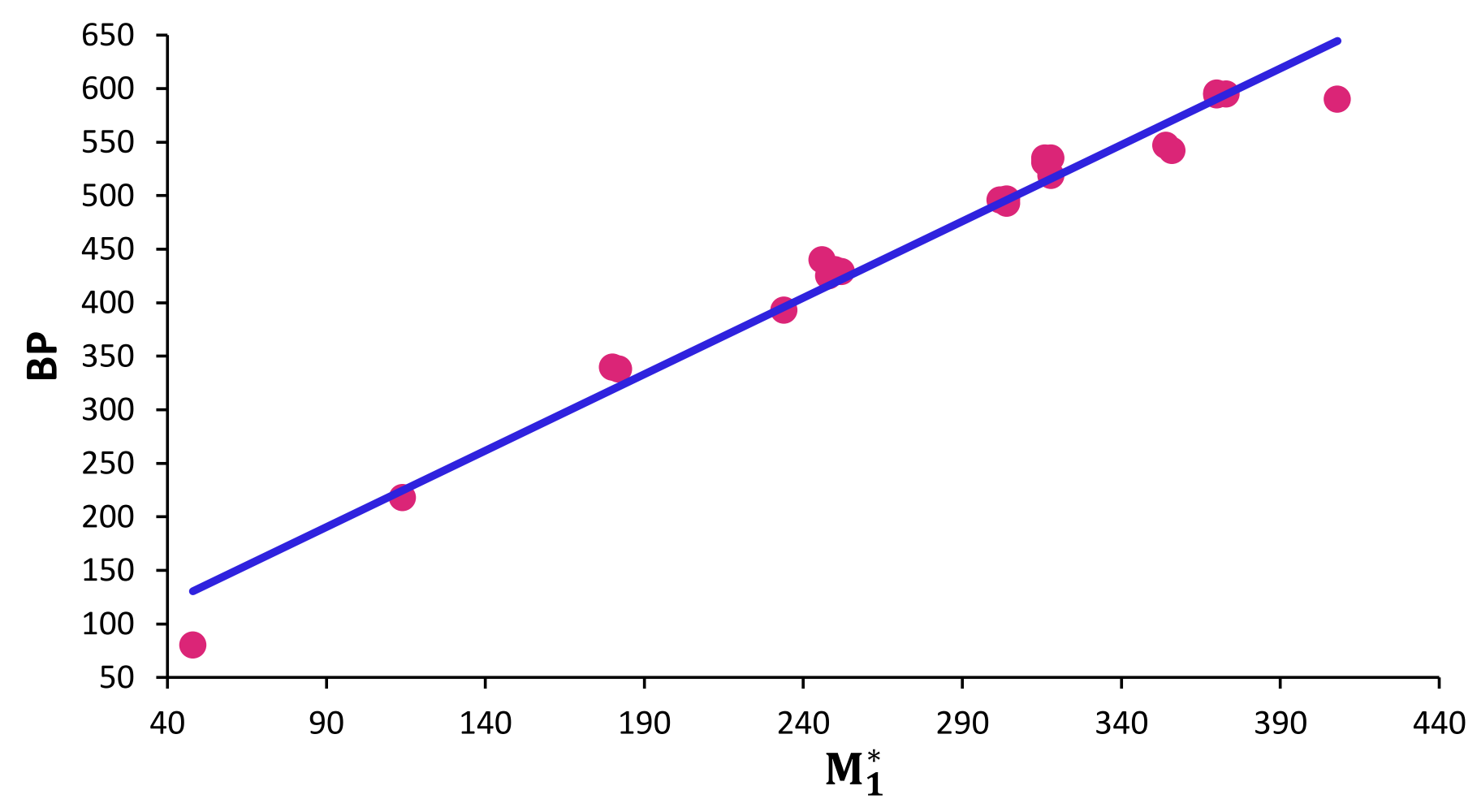}\hspace{1.5cm}
	\includegraphics[width=5.7cm,height=3.4cm]{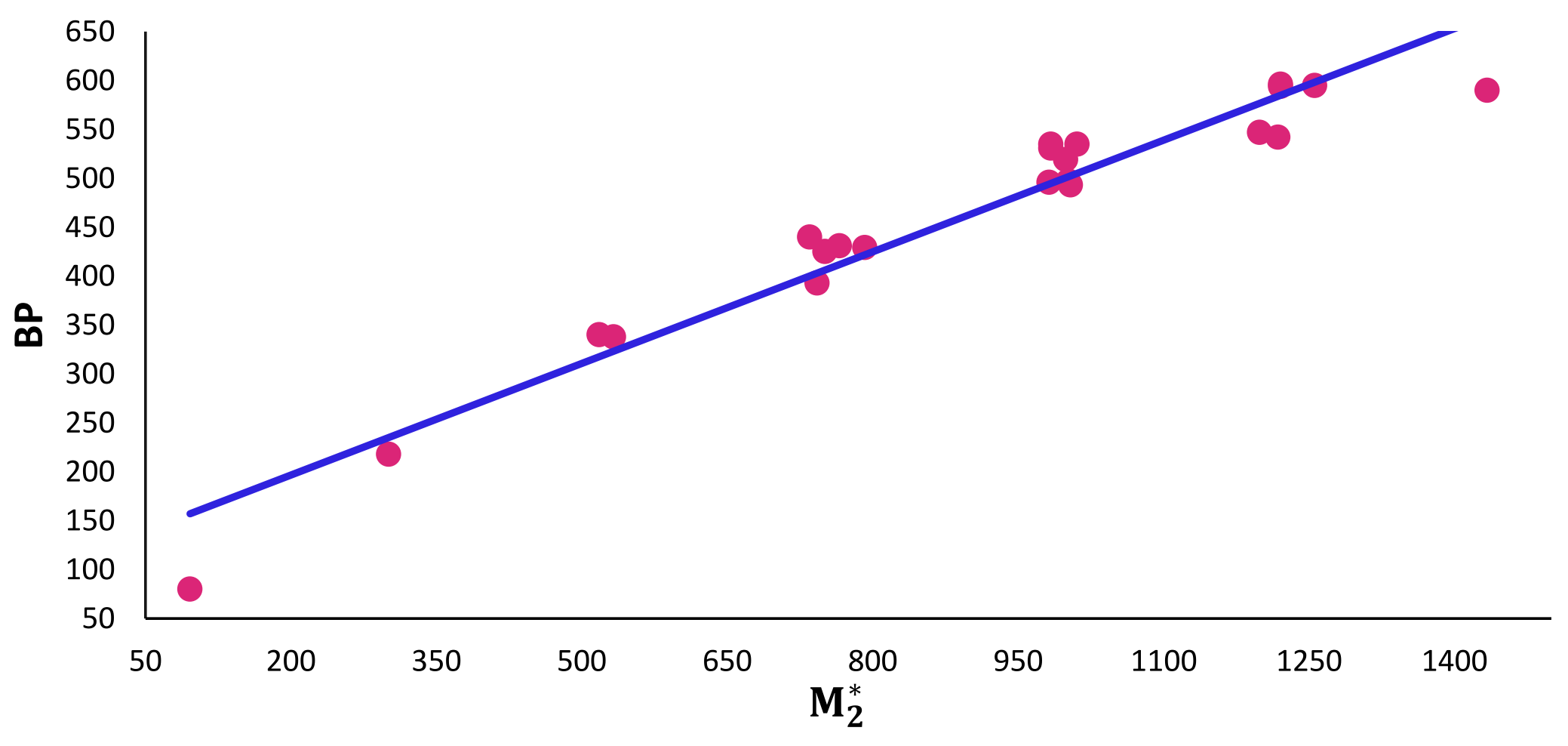}\vspace{0.5cm}
	\includegraphics[width=5.7cm,height=3.4cm]{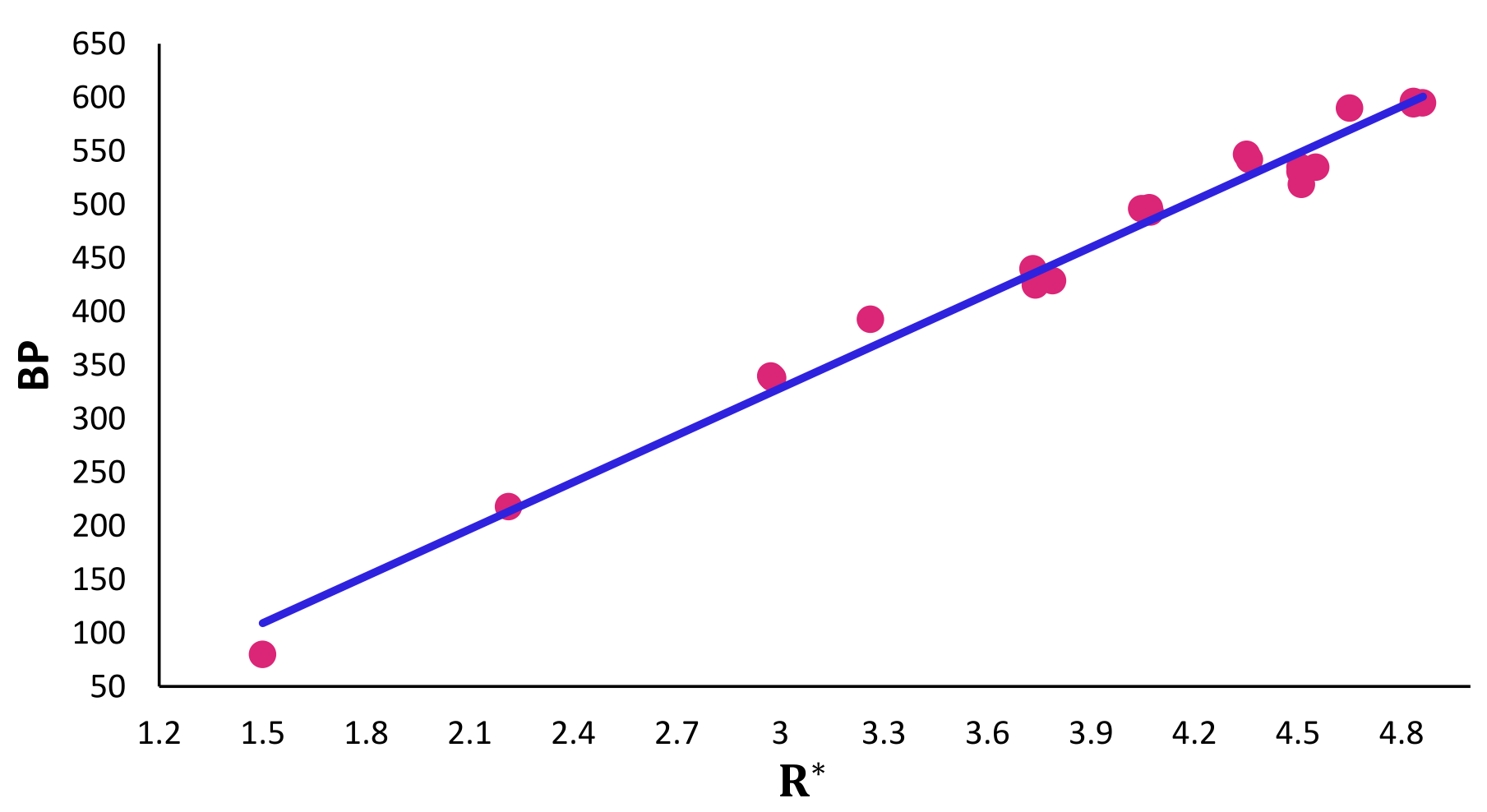}\hspace{1.5cm}
	\includegraphics[width=5.7cm,height=3.4cm]{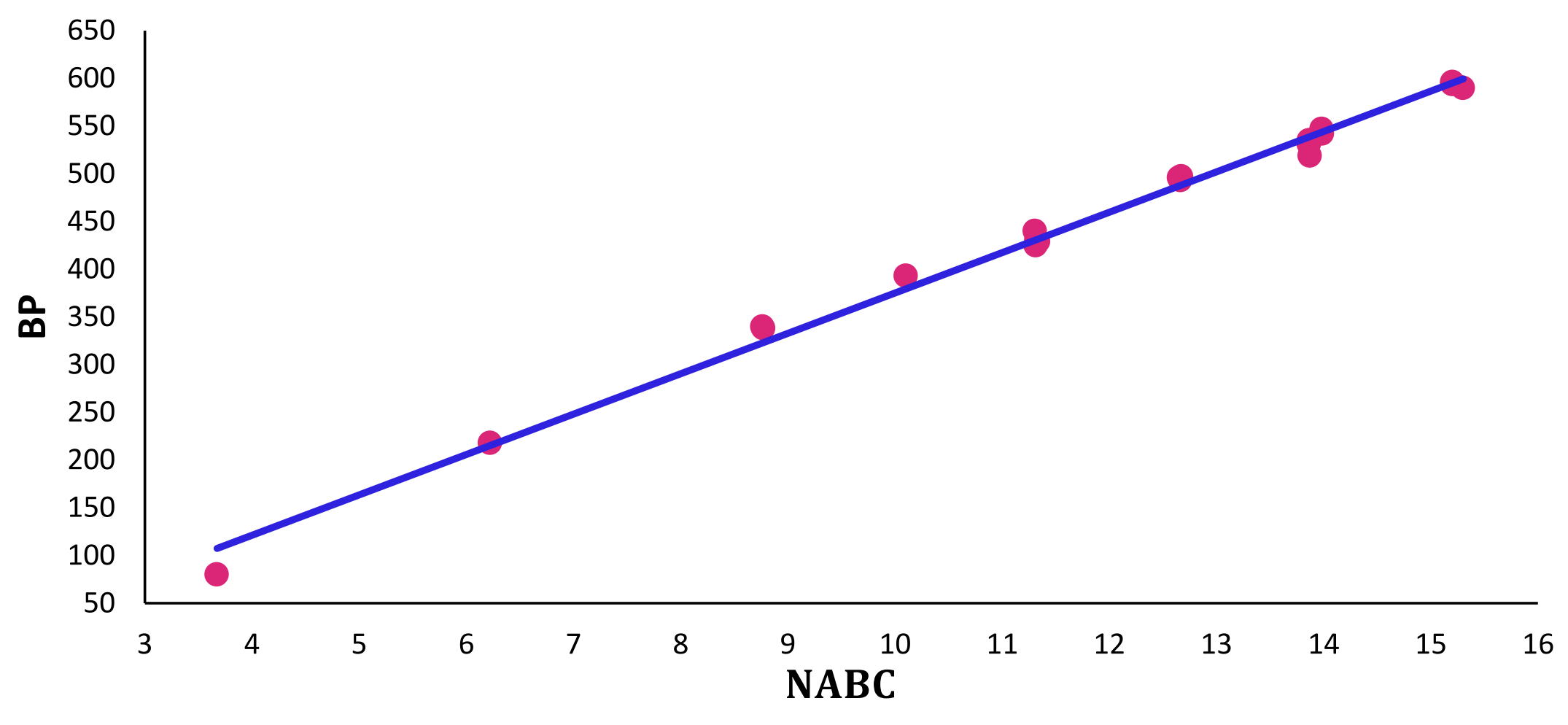}\vspace{0.5cm}
	\includegraphics[width=5.7cm,height=3.4cm]{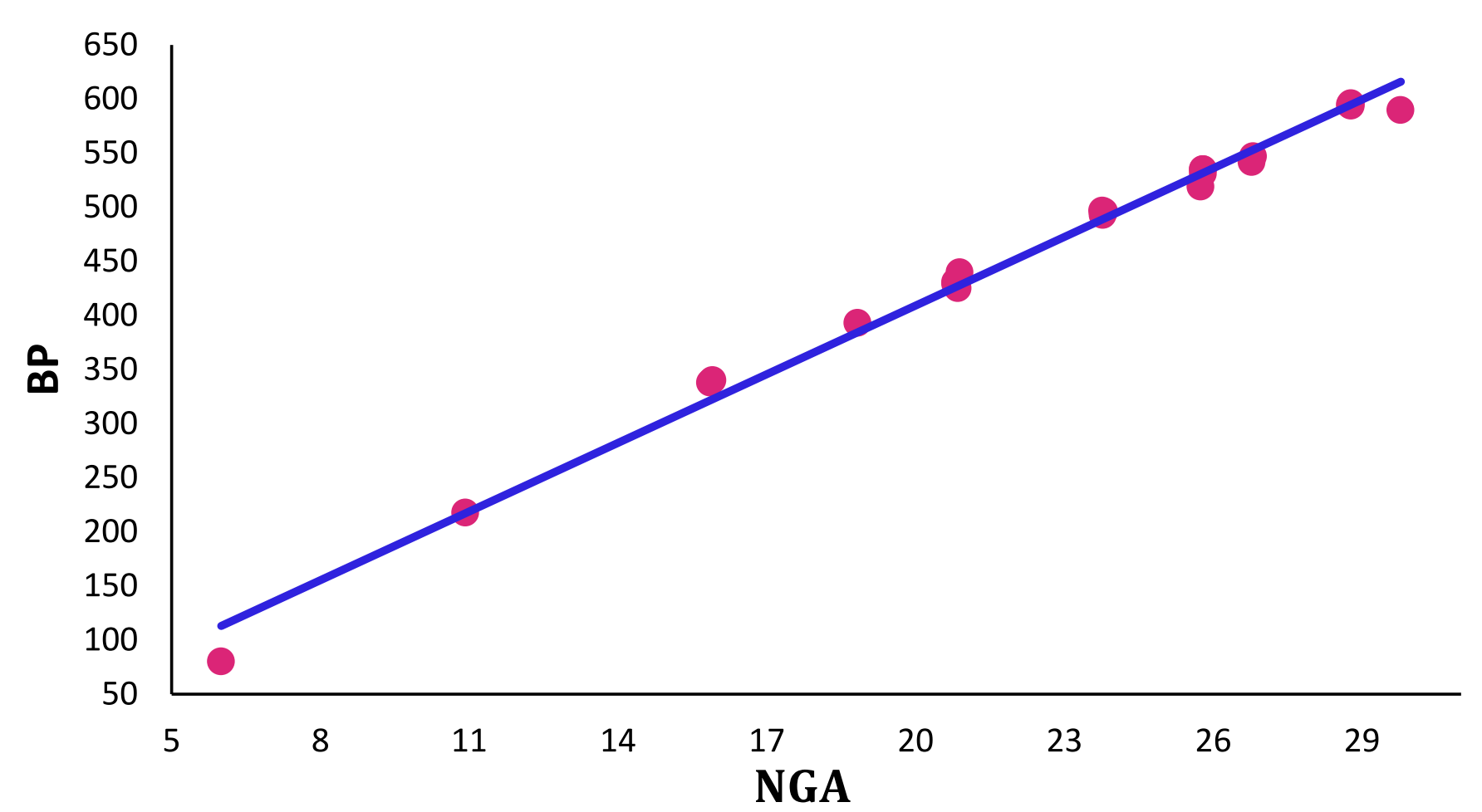}\hspace{1.5cm}
	\includegraphics[width=5.7cm,height=3.4cm]{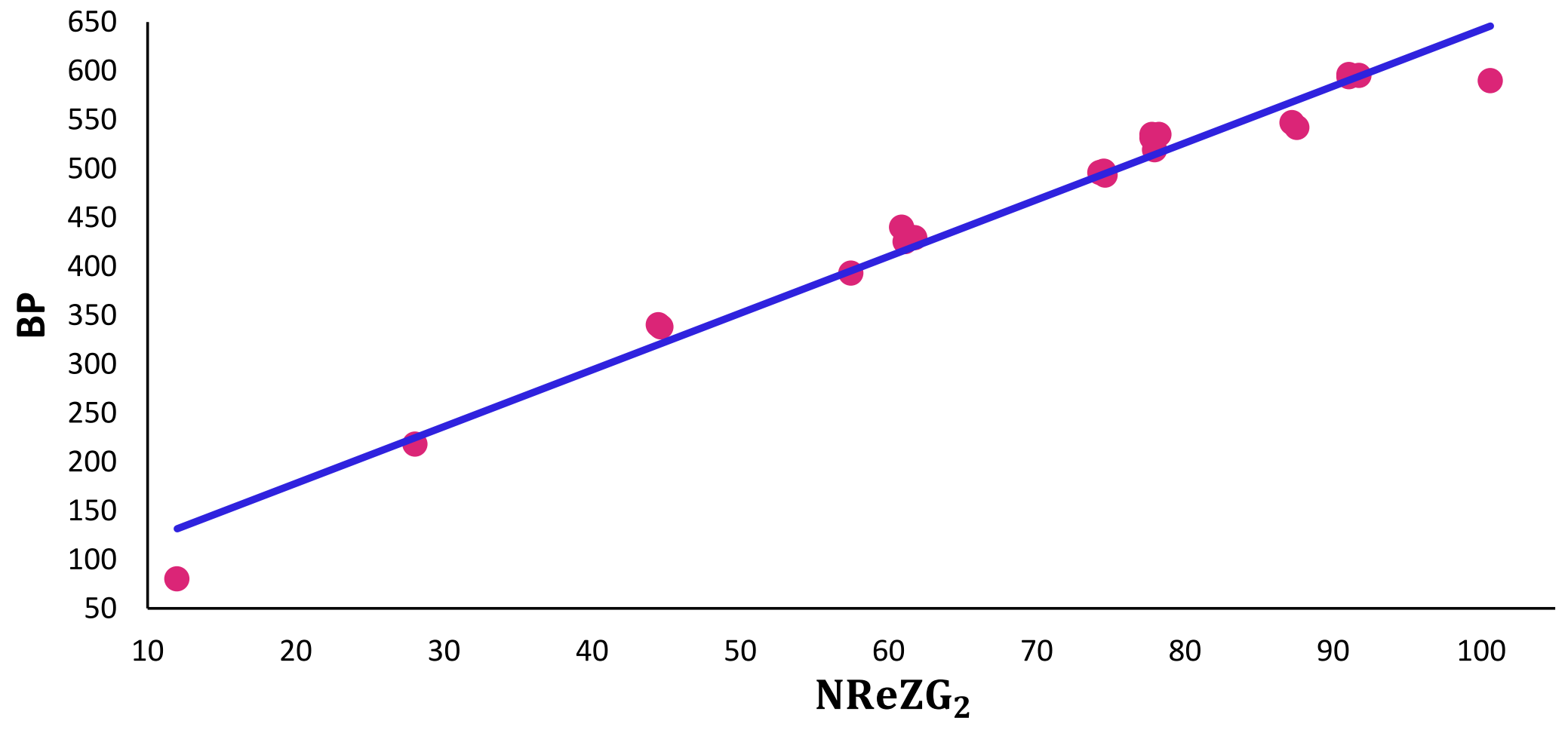}
	\caption{Graphical representation regression model between Boiling point and NDSB version of TIs.}
\end{figure}
\begin{figure}[ht]
	\centering
	\includegraphics[width=5.7cm,height=3.4cm]{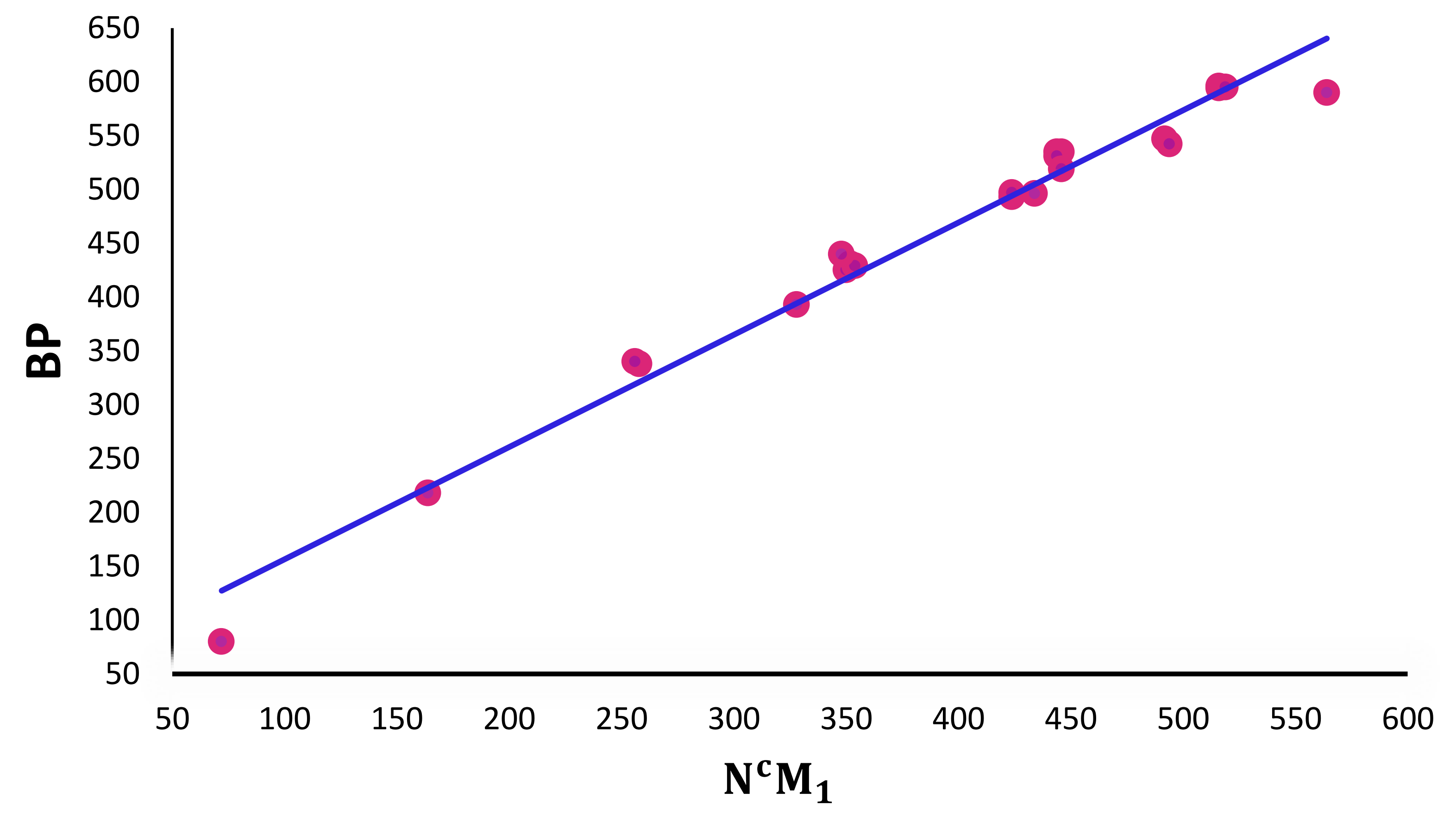}\hspace{1.5cm}
	\includegraphics[width=5.7cm,height=3.4cm]{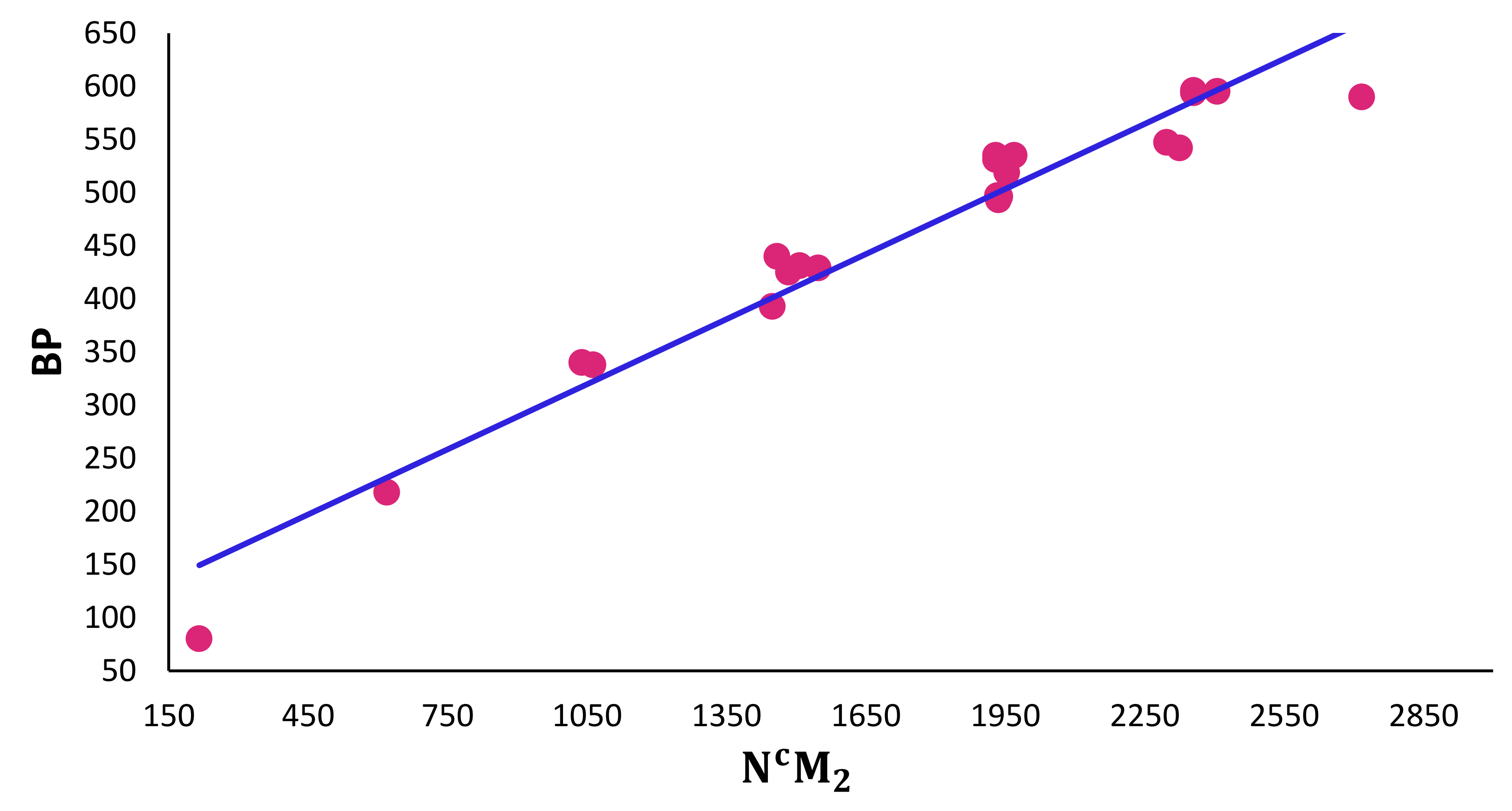}\vspace{0.5cm}
	\includegraphics[width=5.7cm,height=3.4cm]{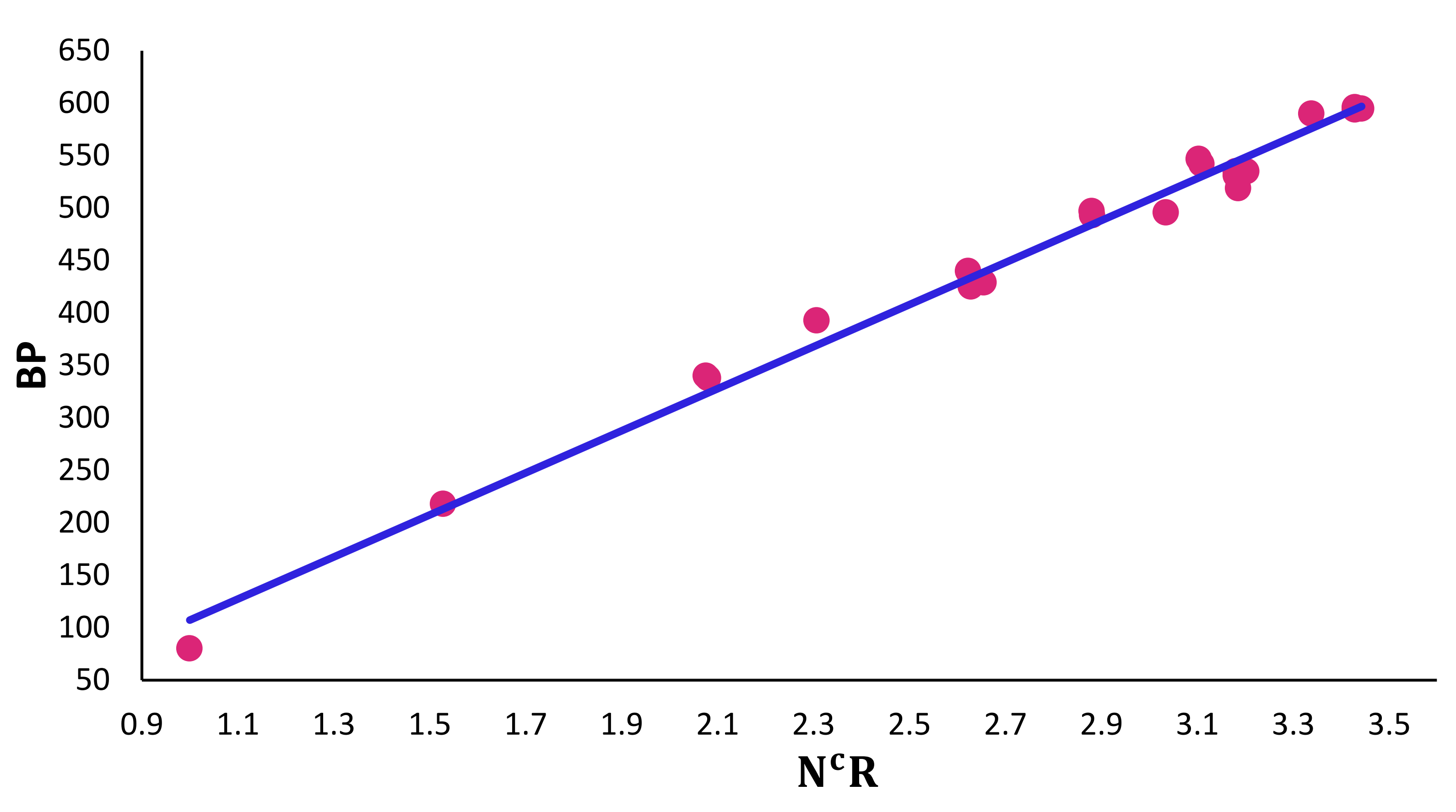}\hspace{1.5cm}
	\includegraphics[width=5.7cm,height=3.4cm]{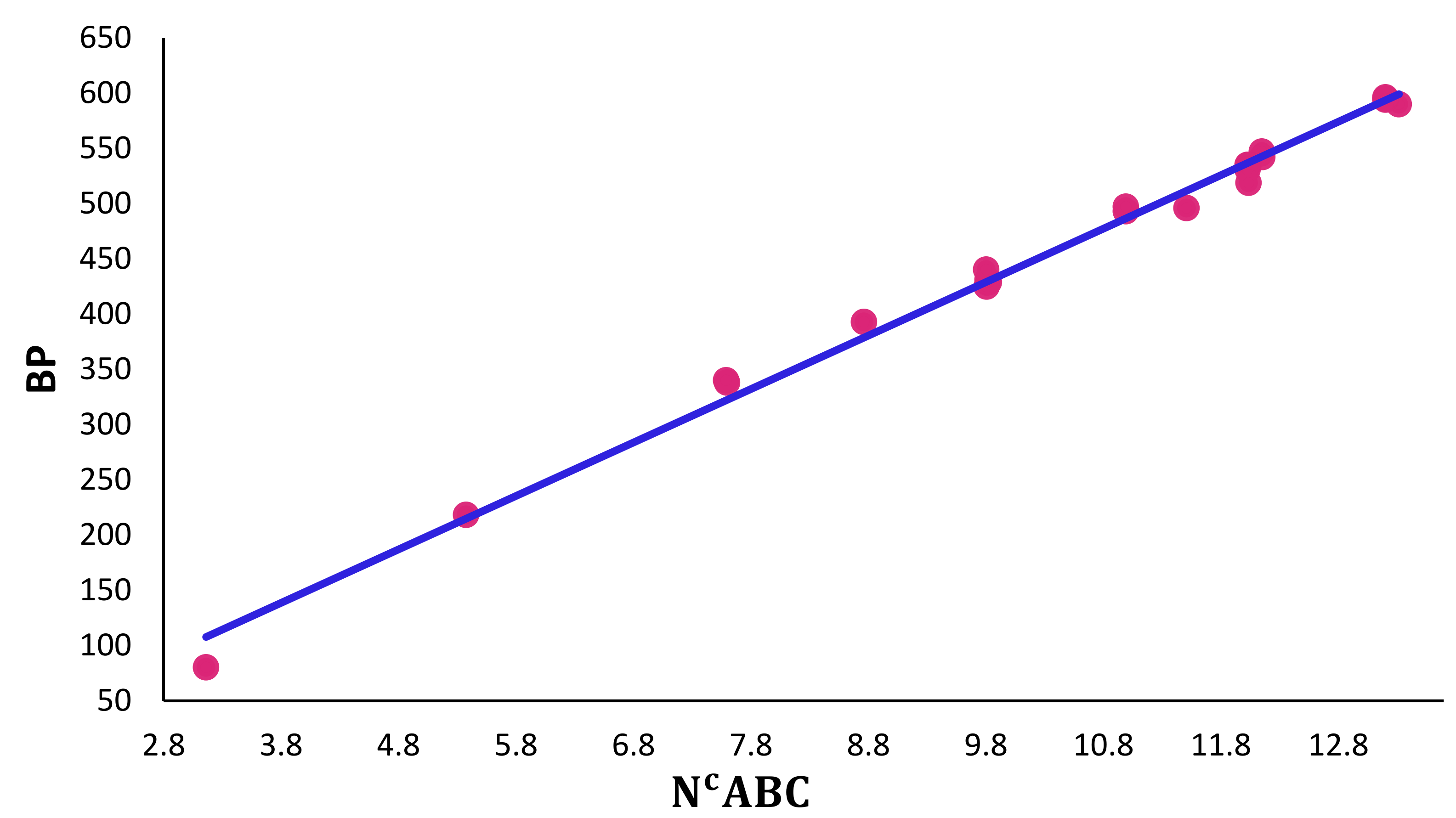}\vspace{0.5cm}
	\includegraphics[width=5.7cm,height=3.4cm]{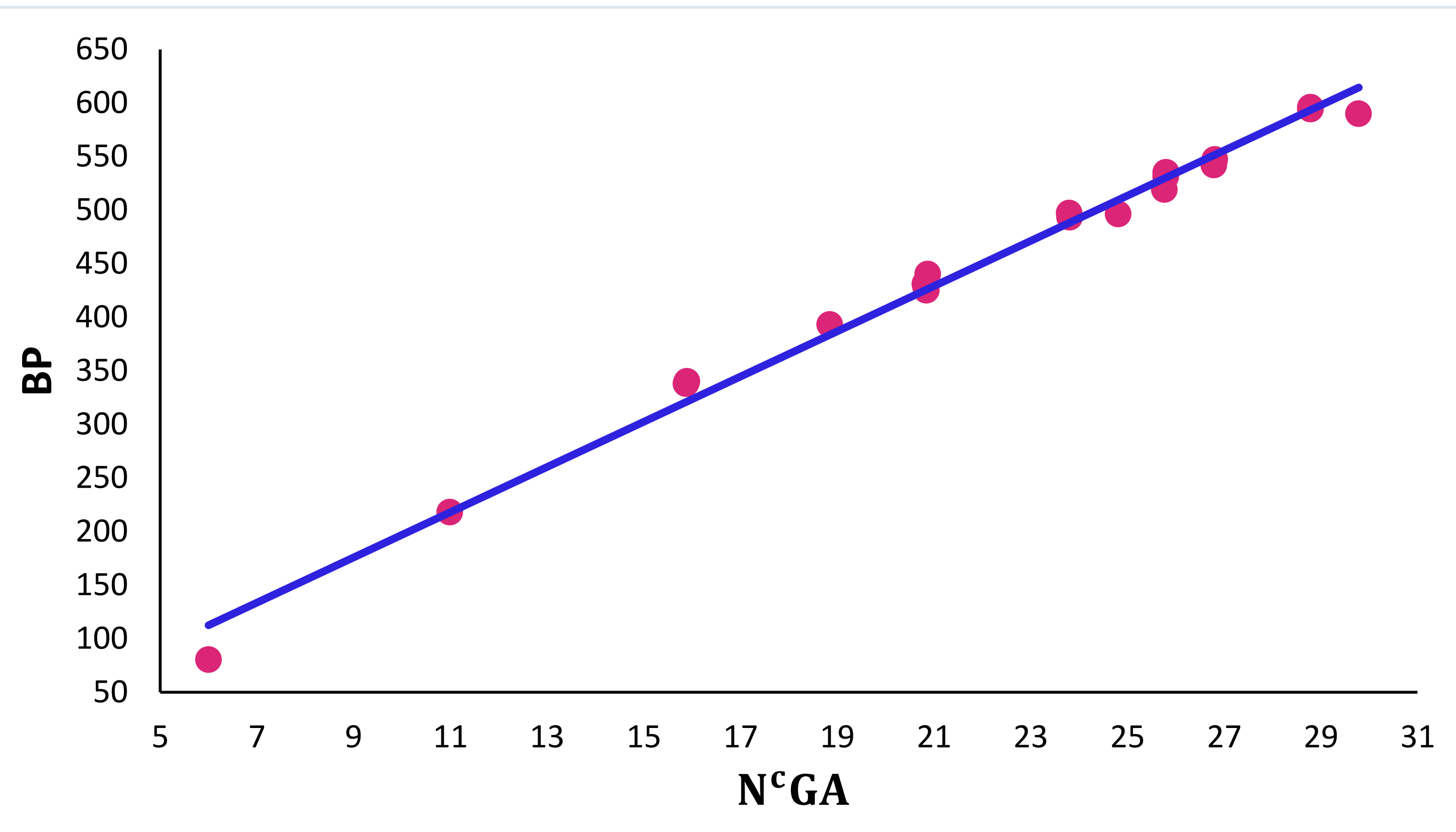}\hspace{1.5cm}
	\includegraphics[width=5.7cm,height=3.4cm]{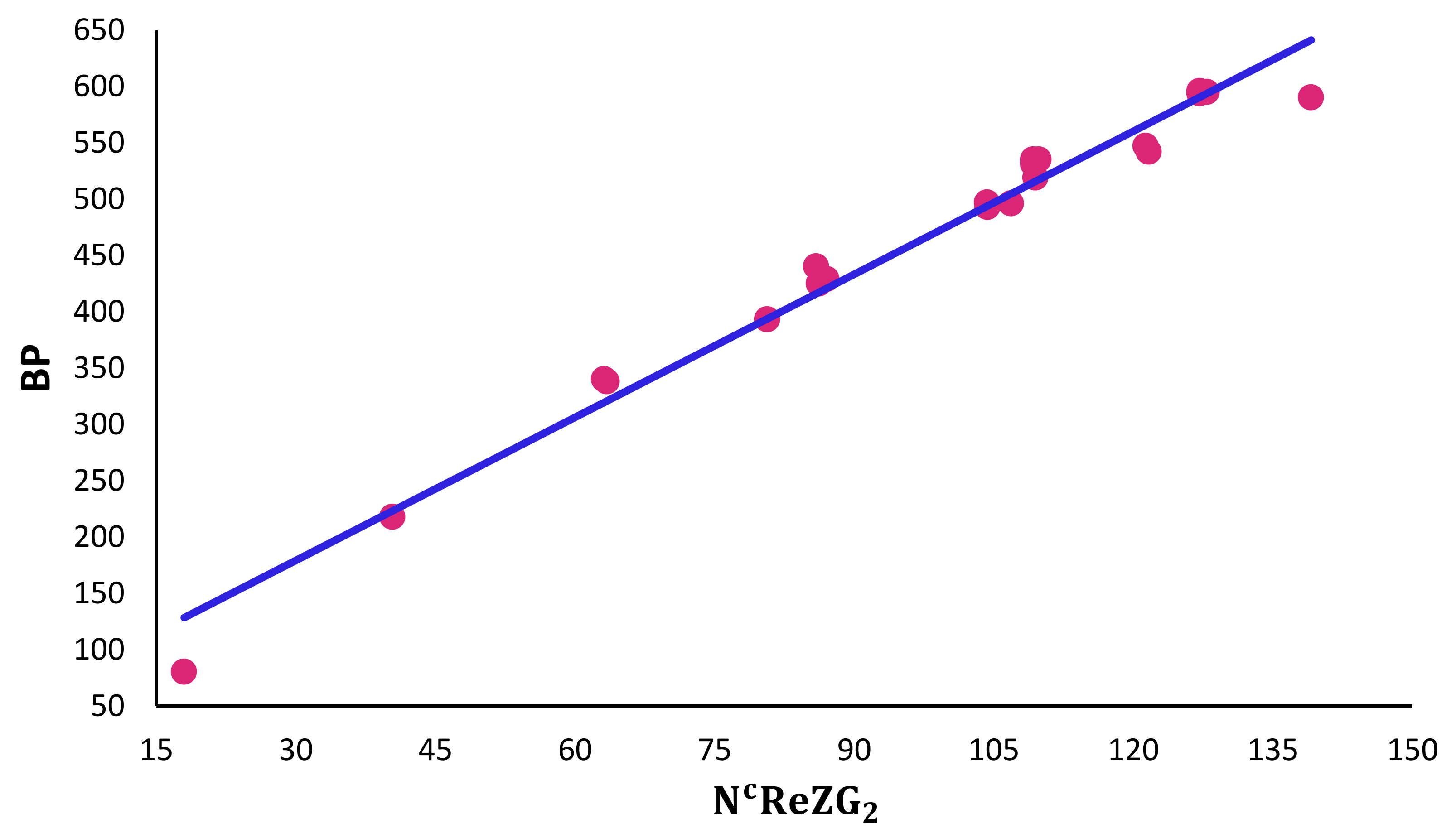}
	\caption{Graphical representation regression model between Boiling point and CNDSB version of TIs.}
\end{figure}

\begin{table}[ht]
	\centering
	\caption{The Significant regression model: Boiling point 22 lower PAHs on Degree-Based based form of TIs.}
	\resizebox{\textwidth}{!}{
		\begin{tabular}{llcc}
			\hline
			Descriptors	& Regression Model & $ R^2 $ & \text{Correlation Coefficient (r)} \\ \hline
			$ M_1 $ & $ BP=3.8778M_1+26.629 $ & 0.9845 & 0.99222 \\
			$ M_2 $ & $ BP=2.8577M_2+61.798 $ & 0.9722 & 0.986002 \\
			$ R $ & $ BP=54.87R-57.555 $ & 0.9926 & 0.996293 \\
			$ ABC $ & $ BP=30.467(ABC)-18.972 $ & 0.9925 & 0.996243 \\
			$ GA $ & $ BP=21.141(GA)-13.635 $ & 0.9911 & 0.99554 \\
			$ ReZG_2 $ & $ BP=15.758(ReZG_2)+26.863 $ & 0.9834 & 0.991665 \\ \hline
	\end{tabular}}
\end{table}

\begin{table}[ht]
	\centering
	\caption{The Significant regression model: Boiling point 22 lower PAHs on Neighborhood Degree Sum-Based form of TIs.}
	\resizebox{\textwidth}{!}{
		\begin{tabular}{llcc}
			\hline
			Descriptors	& \text{Regression Model} & $ R^2 $ & \text{Correlation Coeffecient (r)} \\ \hline
			$ M_1{}^* $ & $ BP=1.4281M_1^*+61.954 $ & 0.9722 & 0.986002 \\
			$ M_2{}^* $ & $ BP=0.3803M_2^*+120.62 $ & 0.9362 & 0.967574 \\
			$ R^* $ & $ BP=146.19R^*-110.17$ & 0.9837  & 0.991817 \\
			$ NABC $ & $ BP=42.312(NABC)-48.038 $ & 0.9933 & 0.996644 \\
			$ NGA $ & $ BP=21.148(NGA)-13.878 $ & 0.9913 & 0.99564 \\
			$ NReZG_2 $ & $ BP=5.8042(NReZG_2)+61.859 $ & 0.9717 & 0.985748 \\ \hline
	\end{tabular}}
\end{table}

\begin{table}[ht]
	\centering
	\caption{The Significant regression model: Boiling point 22 lower PAHs on Closed Neighborhood Degree Sum-Based form of TIs.}
	\resizebox{\textwidth}{!}{
		\begin{tabular}{llcc}
			\hline
			\text{Descriptor}	& \text{Regression Model} & $ R^2 $ & \text{Correlation Coeffecient (r)} \\
			\hline
			$N^cM_1$ & $ BP=1.0427(N^cM_1)+52.261 $ & 0.9758 & 0.987826 \\
			$ N^cM_2 $ & $ BP=0.2041(N^cM_2)+105.26 $ & 0.9473 & 0.973293 \\
			$ N^cR $ &$  BP=200.42(N^cR)-93.144 $ & 0.9872 & 0.993579 \\
			$ N^cABC $ & $ BP=48.386(N^cABC)-45.209 $ & 0.9927 & 0.996343 \\
			$ N^cGA $ & $ BP=21.102(N^cGA)-14.102 $ & 0.9909 & 0.99544 \\
			$ N^cReZG_2 $ & $ BP=4.2328(N^cReZG_2)+52.305 $ & 0.9751 & 0.987472 \\
			\hline
	\end{tabular}}
\end{table}
\begin{figure}[ht]
	\centering
	\includegraphics[width=13cm, height=4.5cm]{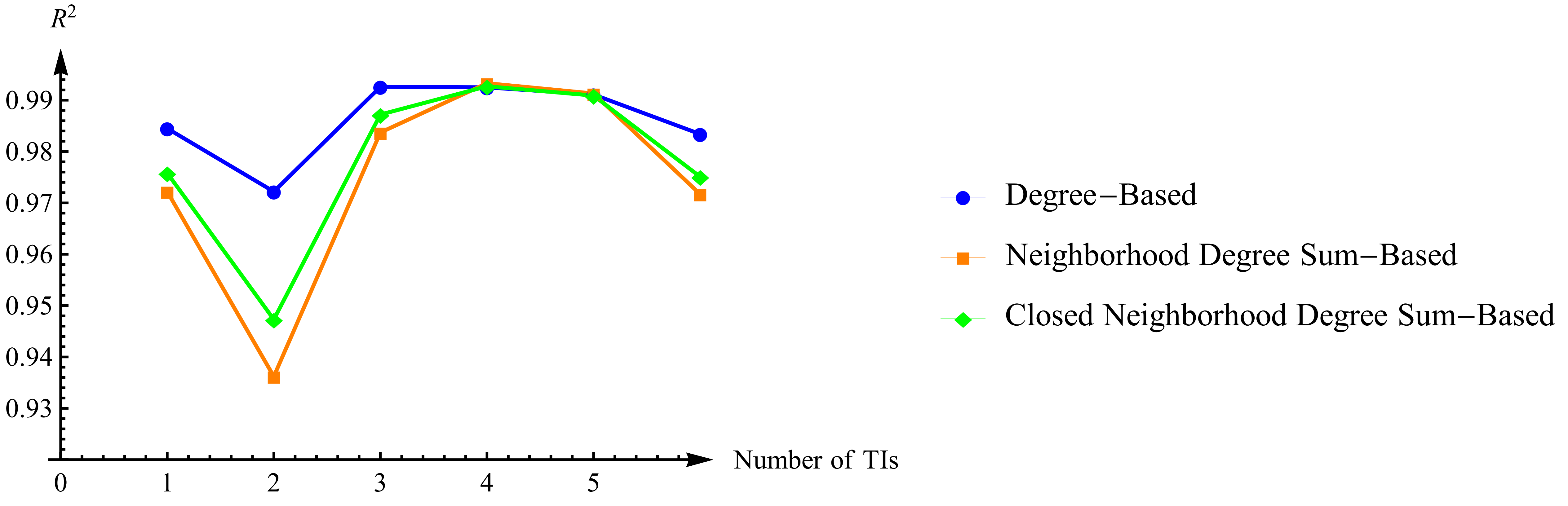}
	\caption{Predictive potential comparison of DB, NDSB and CNDSB versions of considered TIs for BP. }
\end{figure}
\section{Conclusion}
In this manuscript we evaluate the predictive potential of six well known degree dependent TIs for three different versions with normal boiling point of 22 PAHs. We indicate that all considered TIs have excellent correlation with normal boiling point of considered group of molecules. In this article we develop QSPR analysis between three versions of six well known degree dependent TIs normal boiling points of 22 lower PAHs. The linear regression model shows, the Randic, Atom bond connectivity and Geometric Arithmetic Indices ($R, ABC and GA$) have the highest correlation Coefficient values with boiling point in all three versions. The DB version of considered TIs shows better results compared to the NDSB and CNDSB versions, the second best predictor version is CNDSB version. 

\clearpage
\bibliographystyle{plain}
\bibliography{document.bib}

\end{document}